\newtheorem{thm}{Theorem}\crefname{thm}{Theorem}{Theorems}
\newtheorem*{thm*}{Theorem}
\newtheorem{lem}{Lemma}\crefname{lem}{Lemma}{Lemmas}
\crefname{prop}{Proposition}{Propositions}
\crefname{cor}{Corollary}{Corollaries}
\theoremstyle{definition}
\crefname{def}{Definition}{Definitions}
\theoremstyle{remark}
\setlist[enumerate,1]{label=(\roman*)}
\DeclareMathOperator{\sgn}{sgn}
\DeclareMathOperator{\mera}{MERA}
\DeclareMathOperator{\exact}{exact}
\DeclareMathOperator{\normalized}{norm}
\DeclarePairedDelimiter{\abs}{\lvert}{\rvert}
\DeclarePairedDelimiter{\norm}{\lVert}{\rVert}
\DeclarePairedDelimiter{\braket}{\langle}{\rangle}
\newcommand{\ket}[1]{\lvert#1\rangle}
\newcommand{\CC}{\mathbb C}
\newcommand{\ZZ}{\mathbb Z}
\newcommand{\RR}{\mathbb R}
\newcommand{\NN}{\mathbb N}
\newcommand{\ot}{\otimes}
\newcommand{\op}{\oplus}
\newcommand{\eps}{\varepsilon}
\newcommand{\id}{\mathbbm 1}
\renewcommand{\d}{\ensuremath{\mathrm{d}}}
\newcommand{\cL}{\mathcal L}
\newcommand*{\tran}{{\mkern-1.5mu\mathsf{T}}}
\newcommand{\wnorm}{D}
\newcommand{\change}[1]{#1}
\begin{document}

\begin{center}{\Large \textbf{
Bosonic entanglement renormalization circuits from wavelet theory
}}\end{center}

\begin{center}
Freek Witteveen\textsuperscript{1,*},
Michael Walter\textsuperscript{1,2}
\end{center}

\hypersetup{pdftitle={Bosonic entanglement renormalization circuits from wavelet theory},pdfauthor={Freek Witteveen and Michael Walter}}

\begin{center}
{\bf 1} Korteweg-de Vries Institute for Mathematics and QuSoft, \\ University of Amsterdam, Netherlands
\\
{\bf 2} Institute for Theoretical Physics and Institute for Language, Logic, and Computation, \\ University of Amsterdam, Netherlands
\\
* f.g.witteveen@uva.nl
\end{center}

\begin{center}
April 20, 2021
\end{center}


\section*{Abstract}
{\bf
Entanglement renormalization is a unitary real-space renormalization scheme.
The corresponding quantum circuits or tensor networks are known as MERA, and they are particularly well-suited to describing quantum systems at criticality.
In this work we show how to construct Gaussian bosonic quantum circuits that implement entanglement renormalization for ground states of arbitrary free bosonic chains.
The construction is based on wavelet theory, and the dispersion relation of the Hamiltonian is translated into a filter design problem.
We give a general algorithm that approximately solves this design problem and provide an approximation theory that relates the properties of the filters to the accuracy of the corresponding quantum circuits. 
Finally, we explain how the continuum limit (a free bosonic quantum field) emerges naturally from the wavelet construction.
}

\vspace{10pt}
\noindent\rule{\textwidth}{1pt}
\tableofcontents\thispagestyle{fancy}
\noindent\rule{\textwidth}{1pt}
\vspace{10pt}

\section{Introduction}
\change{An important task in the study of quantum many-body systems is finding useful parameterizations of physically relevant quantum states.
One successful approach is to consider so-called \emph{tensor network states}, which are defined by contractions of local tensors according to a network or graph structure.
This gives a natural way to prescribe the entanglement structure of the state, while retaining the ability to describe interesting states such as low energy states of local Hamiltonians.
See~\cite{orus2019tensor,orus2014practical,montangero2018introduction,eisert2010colloquium} for reviews of tensor network states.
Tensor networks are particularly useful to implement real-space renormalization methods for strongly interacting quantum many-body systems.}
In one spatial dimension, prominent examples are the density matrix renormalization group~\cite{white1992density}, with the associated tensor network class of matrix product states (MPS)~\cite{schollwock2011density} and \emph{entanglement renormalization}~\cite{vidal2007entanglement}, with the corresponding multiscale entanglement renormalization ansatz (MERA) states~\cite{vidal2007entanglement, vidal2008class}.
Entanglement renormalization implements a real-space renormalization by a local unitary transformation, decomposing a state into a product state and the renormalized state.
By applying many such layers one can build a highly entangled state from product states.
Scale-invariant MERA states are a good variational class for approximating ground states of critical quantum chains and one can extract the conformal data of the continuum limit conformal field theory of the system from the entanglement renormalization superoperator~\cite{evenbly2013quantum}.
If the entanglement renormalization unitaries are implemented by low-depth local quantum circuits we will call this an \emph{entanglement renormalization circuit} -- see \cref{fig:mera} for an illustration.
This class of states can be prepared efficiently on a quantum computer, which makes them a promising ansatz class for variational optimization on a quantum computer.
This latter perspective was introduced in~\cite{kim2017robust}, \change{where the corresponding class was called \emph{DMERA}.
The contraction cost using known classical contraction algorithms of such DMERA states increases exponentially with the depth of the quantum circuit, compared to which the contraction of these states is exponentially faster on a quantum computer.
Another appealing property of entanglement renormalization circuits is that they are robust to small errors, which makes them interesting candidates for noisy intermediate-scale quantum (NISQ) devices~\cite{kim2017robust,preskill2018quantum}.}
Entanglement renormalization circuits are appealing as their depth is logarithmic in the system size, and the circuit depth of a single layer typically scales polylogarithmically in the desired error, and they apply to gapless systems.
See~\cite{huggins2019towards,zhou2020limits} for some other applications of tensor networks for quantum computing.

Unfortunately our analytic understanding of MERA in general and DMERA in particular is still limited (as compared to for instance MPS).
One direction in which progress to analytic understanding has been made is in connection to wavelets.
\change{Wavelet transforms decompose a signal as a linear combination of localized wave packets or `wavelets' at different scales (as compared to the Fourier transform, which uses plane waves).
This can be implemented iteratively:
In each step the signal is decomposed into a high-frequency component (the `details' of the signal) and a low-frequency component (the `large scale structure' of the signal).
The wavelet transform then proceeds iteratively on the low-frequency component of the signal.
Wavelet theory has many and wide-ranging applications, from practical signal processing applications such as image compression~\cite{mallat2008wavelet} to mathematical analysis~\cite{meyer1992wavelets}.
The procedure of the wavelet transform is very similar to real-space renormalization, and its original development was partially motivated by applications in real-space renormalization.}

Recently it has been observed~\cite{evenbly2016entanglement,haegeman2018rigorous,evenbly2018representation} that any finite wavelet transform can be written as a classical linear circuit whose fermionic second quantization gives rise to a free fermionic (D)MERA.
Moreover, the continuum limit can be precisely related to the corresponding wavelet functions~\cite{witteveen2019quantum}.
In~\cite{evenbly2016entanglement} it was suggested that a similar result could also be true for free bosonic systems.
\change{In order to formulate what this entails, we will work with bosonic quantum circuits.
This means that we have a set of bosonic modes, and a bosonic quantum circuit will be a sequence of operations acting locally on these bosonic modes.
We restrict to the subclass of Gaussian or linear optics circuits, meaning that each local operation is implemented by time evolution with a quadratic Hamiltonian.
This is an efficiently simulable subclass of all bosonic quantum circuits (upon adding non-Gaussian bosonic quantum gates, however, bosonic quantum circuits are able of universal quantum computation~\cite{knill2001scheme}).
In contrast to more usual notions of quantum circuits and tensor networks, the Hilbert spaces are infinite dimensional.
In particular, the usual definition of a tensor network with a finite bond dimension has no immediate analogue.
However, finite-depth quantum circuits such as entanglement renormalization circuits of the form of \cref{fig:mera} are still meaningful even in this infinite-dimensional bosonic setup.
The notion of Gaussian bosonic entanglement renormalization has been introduced and studied in~\cite{evenbly2010entanglement}, in which an extensive explanation of the formalism can be found.}

\begin{figure}[t]
\centering
\begin{overpic}[width=0.6\textwidth,grid=false]{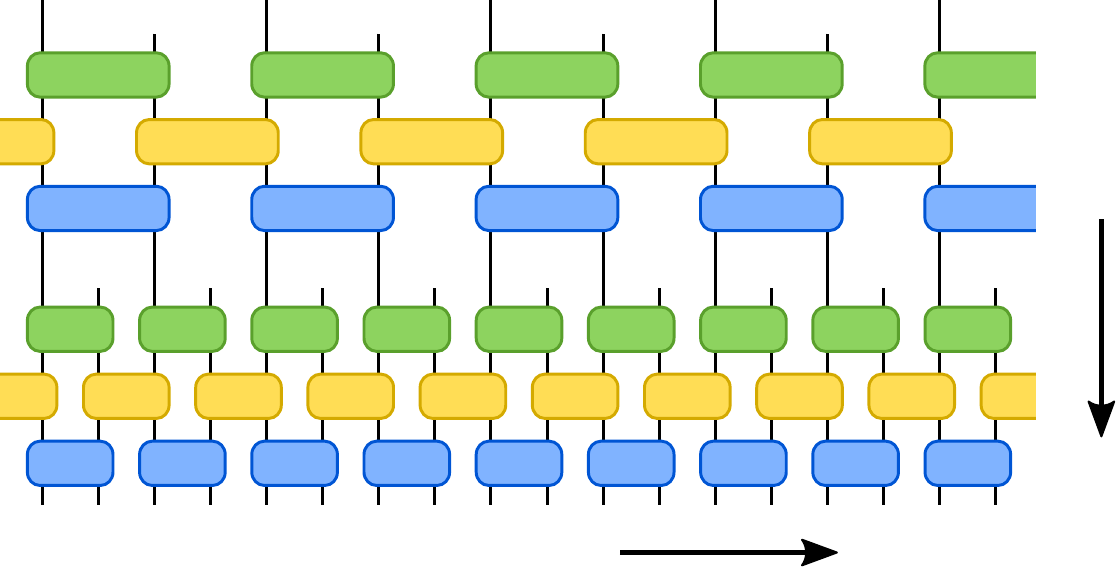}
\put(102,25){\small{Scale/}}
\put(102,20){\small{circuit}}
\put(102,15){\small{depth}}
\put(60,-4){\small{Space}}
\put(13,49){\tiny{$\ket0$}} \put(33,49){\tiny{$\ket0$}} \put(53,49){\tiny{$\ket0$}} \put(73,49){\tiny{$\ket0$}}
\put(8,26){\tiny{$\ket0$}} \put(18,26){\tiny{$\ket0$}} \put(28,26){\tiny{$\ket0$}} \put(38,26){\tiny{$\ket0$}} \put(48,26){\tiny{$\ket0$}} \put(58,26){\tiny{$\ket0$}} \put(68,26){\tiny{$\ket0$}} \put(78,26){\tiny{$\ket0$}} \put(88,26){\tiny{$\ket0$}}
\end{overpic}
\vspace{0.2cm}
\caption{The structure of an entanglement renormalization circuit. Each layer is a constant depth quantum circuit that is supposed to implement a real-space renormalization. Every layer takes as input the output of the previous layer and a product state, resulting in an entangled quantum state at the bottom. Layers further up in the figure correspond to structure at larger scales.}
\label{fig:mera}
\end{figure}

\subsection{Main results}

In this work we show that one can indeed construct a Gaussian bosonic entanglement renormalization scheme for bosonic quadratic one-dimensional Hamiltonians, using the second quantization of \emph{biorthogonal wavelet filters} or perfect reconstruction filters.
\change{This extends the wavelet-MERA correspondence substantially.}
The resulting entanglement renormalization takes the form of a short-depth Gaussian bosonic circuit, providing evidence for the relevance of entanglement renormalization circuits for preparing ground states of (near) critical quantum systems.
Moreover we can relate, similar as in the fermionic case~\cite{haegeman2018rigorous}, properties of the biorthogonal wavelet transform to the resulting MERA state, and we prove a rigorous approximation theorem for the correlation functions of the MERA state.
Interestingly, our formalism is not restricted to the scale-invariant case, but can be used to construct entanglement renormalization circuits for arbitrary translation invariant quadratic bosonic Hamiltonians.
Given such a Hamiltonian, we explain how a corresponding (approximate) entanglement renormalization circuit can be found by solving a filter design problem.
We also give a general method for constructing such filters, similar to the construction of the Daubechies wavelets.
\change{This is in contrast to the fermionic case, where the only known constructions are for massless (critical) fermions \cite{evenbly2016entanglement,haegeman2018rigorous}.}
Finally, the continuum limit of the discrete system is directly related to the biorthogonal scaling and wavelet functions corresponding to the filters.
For the free massless boson our construction reproduces various scaling dimensions \emph{exactly}.
If the system is not scale-invariant, we explain how one can still define versions of the wavelet and scaling functions which are not scale-invariant.

A natural application of a quantum computer based on bosonic variables~\cite{knill2001scheme} is to simulate bosonic quantum field theories~\cite{marshall2015quantum}, and wavelets are a very efficient choice of basis to discretize a quantum field theory for this purpose~\cite{brennen2015multiscale}.
We explain that for any free 1+1-dimensional bosonic field theory, one can use suitably chosen biorthogonal wavelets to discretize the theory and use the corresponding wavelet decomposition to prepare its (approximate) ground state using the bosonic Gaussian entanglement renormalization circuit.
The idea to use wavelets to discretize a field theory is quite natural, see for instance~\cite{bulut2013wavelets, brennen2015multiscale, stottmeister2020operator} for some recent discussions of discretizing bosonic field theories using wavelets.
Our approach however fundamentally differs from these works in that we use \emph{biorthogonal} wavelets (as is natural in the bosonic setting), which moreover are specifically designed to target the Hamiltonian of the field theory (rather than using off-the-shelf wavelets such as the Daubechies wavelets).
We hope that our investigations can provide a potential starting point for the efficient simulation of interacting quantum field theories on quantum computers.

\subsection{Organization of the paper}
In~\cref{sec:pir} we give a brief review of biorthogonal filters and wavelet theory.
In~\cref{sec:ent norm} we briefly review the formalism of quadratic bosonic Hamiltonians and Gaussian unitaries.
We then explain the relation between entanglement renormalization and biorthogonal wavelet filters.
In particular, in \cref{sec:derivation filter condition} we derive a relation the filters have to satisfy to disentangle the ground state of a given Hamiltonian.
In~\cref{sec:circuits} we explain how this gives rise to a circuit, and we state~\cref{thm:harmonic chain} which proves bounds on the accuracy of the approximation.
Finally, in~\cref{sec:continuum} we introduce continuous wavelet functions, and show that this gives a natural interpretation of entanglement renormalization in the corresponding quantum field theory.
In the appendices we provide a review of the fermionic MERA/wavelet correspondence in~\cref{sec:review fermions}, an algorithm for constructing appropriate biorthogonal filters in~\cref{sec:filter construction}, an explanation of how to construct Gaussian circuits from filters in~\cref{sec:circuit construction} and a precise statement and proof of \cref{thm:harmonic chain} in \cref{sec:approximation}.
Supporting code used to generate the numerical results in this work can be found at~\cite{pyfermions}.

\section{Perfect reconstruction and biorthogonal filters}\label{sec:pir}
Perfect reconstruction filters, or biorthogonal wavelet filters, are filters that decompose a signal into a high-frequency part and a low frequency part.
This is reminiscent of the disentangling procedure of entanglement renormalization, and in this work we explain the precise connection.
We first give a brief account of the theory of biorthogonal wavelet filters, see~\cite{mallat2008wavelet} for an introduction.
We consider a pair of real-valued sequences~$g_s, h_s \in \ell^2(\ZZ)$, called \emph{scaling} or \emph{low-pass filters}.
Often they will be finite impulse response (FIR) filters, meaning that they have finite support.
We demand that these filters satisfy the \emph{perfect reconstruction condition} on their Fourier transforms
\begin{equation}\label{eq:perfect reconstruction}
  g_s(k)\overline{h_s(k)} + g_s(k + \pi)\overline{h_s(k + \pi)} = 2
\end{equation}
and define corresponding \emph{wavelet} or \emph{high-pass filters} by
\begin{align}\label{eq:scaling vs wavelet}
  g_w(k) = e^{-ik}\overline{h_s(k + \pi)} \quad \text{and} \quad h_w(k) = e^{-ik}\overline{g_s(k + \pi)}.
\end{align}
These filters can be used to separate a signal $\{f[n]\}_{n \in \ZZ}$ into a low-frequency and a high frequency component, and conversely to reconstruct the original signal from these components.
For this, we let 
\begin{align*}
  f^{\text{low}}[n] &= \sum_l g_s[l]f[2n + l], \\
  f^{\text{high}}[n] &= \sum_l g_w[l]f[2n + l],
\end{align*}
and we define
\begin{align*}
  W_g f = f^{\text{low}} \op f^{\text{high}}.
\end{align*}
We similarly define $W_h$ using the filters $h_s$ and $h_w$ in place of $g_s$ and $g_w$, respectively.
By applying $W_g$ again to $f^{\text{low}}$, the original signal is recursively resolved into scales, see \cref{fig:filterbank}.
It follows from \cref{eq:perfect reconstruction} that $f$ can be reconstructed from its decomposition $W_g f$ by applying the transposed operation~$W_h^{\tran}$, so $W_g^{-1} = W_h^\tran$~and
\begin{equation*}
  f[n] = \sum_l h_s[n - 2l]f^{\text{low}}[l] + h_w[n - 2l]f^{\text{high}}[l].
\end{equation*}
The roles of $g$ and $h$ can be exchanged in this procedure.

\begin{figure}
\centering
\quad\raisebox{-1cm}{
\begin{overpic}[width=0.8\textwidth,grid=false]{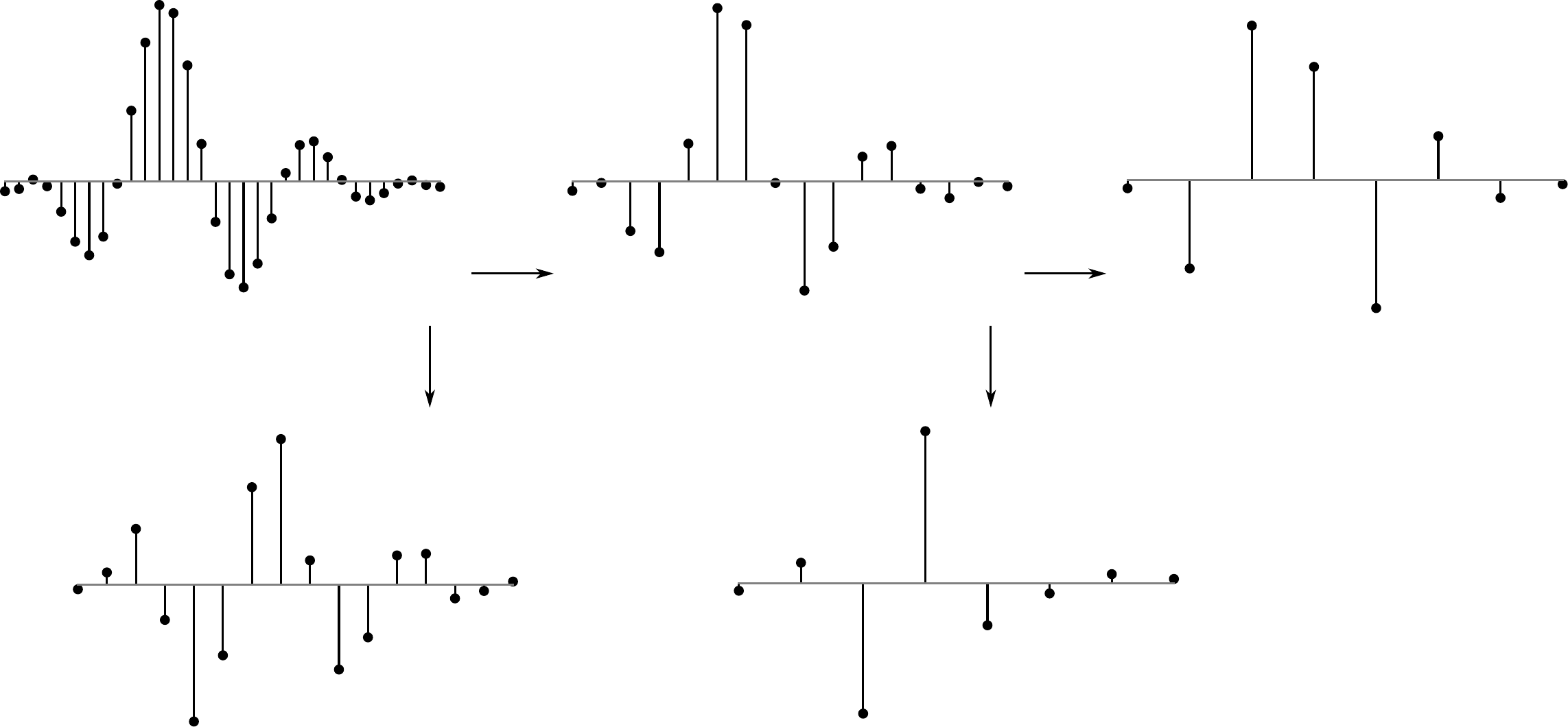}
\put(31,22){$W_g$} \put(66,22){$W_g$}
\put(30,30){\scriptsize low}
\put(20.4,22.5){\scriptsize high}
\put(65.5,30){\scriptsize low}
\put(56.1,22.5){\scriptsize high}
\end{overpic}}
\caption{Iterating the wavelet decomposition $W_g$ resolves a signal into scales. Illustration for the Haar wavelet filters~\cite{witteveen2019quantum}.}
\label{fig:filterbank}
\end{figure}

\section{Entanglement renormalization and filter design}\label{sec:ent norm}
We will consider translation invariant chains of harmonic oscillators $(q_n, p_n)$, with a Hamiltonian of the form
\begin{equation}\label{eq:hamiltonian}
  H = \frac12\bigl(\sum_{n \in \ZZ} p_n^2 + \sum_{n, m \in \ZZ} q_n V_{n - m} q_m\bigr),
\end{equation}
where $V_{nm} = V_{n-m}$ defines a positive definite symmetric matrix.
The ground state of such a quadratic Hamiltonian is a Gaussian state, determined by the dispersion relation $\omega(k)$ of the Hamiltonian.
We study Gaussian circuits that map an unentangled state to the entangled ground state of a translation invariant Hamiltonian (or  conversely, disentangle the ground state to an unentangled state).
We consider Gaussian maps defined by $\tilde{q}_n = \sum_m A_{nm} q_m$, $\tilde{p}_n = \sum_m B_{nm} p_m$.
This preserves the canonical commutation relations if and only if the matrices $A$ and $B$ are such that $B = (A^\tran)^{-1}$.
By a \emph{Gaussian circuit} we will hence understand a sequence of Gaussian maps, each of which maps modes $(q_n, p_n)$ to a linear combination of itself and its direct neighbours.
For details about quadratic bosonic Hamiltonians and Gaussian states see, for instance,~\cite{audenaert2002entanglement,plenio2005entropy, walls2007quantum, knill2001scheme}.
In Fourier space one simply maps a product state to the state with dispersion relation $\omega(k)$ by appropriately `squeezing' each Fourier mode.
However, this is a very \emph{non-local} operation, whereas we are interested in a procedure that is local in real space.
For more discussion of this point and variational algorithms to find Gaussian entanglement renormalization maps, see~\cite{evenbly2010entanglement}.

Since $W_g^{-1} = W_h^{\tran}$ the map $W = W_g \op W_h$ defines a Gaussian map for any pair of biorthogonal wavelet filters $(g,h)$.
This has the structure of a layer of entanglement renormalization, filtering out the high frequency modes.
However, we need to choose the filters $g$ and $h$ such that $W$ actually disentangles the state, and the wavelet output is unentangled.
If we normalize the dispersion relation such that $\omega(\pi) = 1$, then the condition for the wavelet output to be disentangled is that the Fourier transforms of the filters satisfy
\begin{equation}\label{eq:wavelet filter phase relation}
  g_w(k) = \omega(k)h_w(k).
\end{equation}
Intuitively, what happens is that $W$ separates the bosonic modes in high frequency and low frequency modes, and \cref{eq:wavelet filter phase relation} makes sure that the high frequency modes are not entangled to the low frequency modes in the ground state.
We derive this condition below in \cref{sec:derivation filter condition}.
The scaling (low frequency) modes are again mapped to a Gaussian state, possibly with a different dispersion relation.
We can now recursively apply the same construction to the scaling output, as in \cref{fig:mera}, now with the renormalized dispersion relation, given by
\begin{equation}\label{eq:single layer dispersion}
  \omega(k) \mapsto \omega\bigl(\tfrac{k}{2}\bigr) \omega\bigl(\tfrac{k}{2} + \pi\bigr),
\end{equation}
precomposed with a `squeezing' normalization layer to ensure the normalization $\omega(\pi) = 1$ before we apply the wavelet decomposition.
We will later see this procedure can be decomposed as a circuit, see \cref{fig:circuit decomposition}.

While we motivated the procedure from the perspective of disentangling a given entangled state, the resulting circuit can also be used in the opposite direction, to prepare the ground state by applying the circuit to a product state, thus realizing the state as a bosonic MERA state.
A paradigmatic example is the harmonic chain,
\begin{equation}\label{eq:harmonic chain ham}
  H = \frac12 \Bigl( \sum_{n \in \ZZ} p_n^2 + m^2q_n^2 + \frac14(q_n - q_{n+1})^2 \Bigr),
\end{equation}
which has dispersion relation $\omega(k) = \smash{\sqrt{m^2 + \sin^2\bigl(\frac{k}{2}\bigr)}}$.
In particular, the \emph{massless} harmonic chain is gapless and has dispersion relation~$\omega(k) = \abs{\sin(\frac{k}{2})}$.
For the latter, \cref{eq:single layer dispersion} amounts to~$\omega(k) \mapsto \abs{\sin(\frac{k}{4})\cos(\frac{k}{4})} = \frac12\sin(\frac{k}{2})$, so the dispersion relation is invariant under the renormalization step if we include the subsequent normalization.
Hence the state on the scaling output of the entanglement renormalization will be the same after any number of layers.
This implies that we can keep iterating the same entanglement renormalization layer with identical filters at each layer, giving a \emph{scale-invariant} bosonic entanglement renormalization procedure for the massless harmonic chain.

In the massive case, the mass renormalizes as
\begin{equation}\label{eq:mass renormalization}
  m \mapsto 2\sqrt{m^2 + m^4}.
\end{equation}
This is a \emph{relevant} perturbation to the massless chain~\cite{evenbly2010entanglement}, and with increasing number of layers the dispersion relation becomes flat; correspondingly we can let the filters at the deeper layers approach orthogonal wavelet filters.

\subsection{Derivation of filter condition}\label{sec:derivation filter condition}
We will now derive~\cref{eq:wavelet filter phase relation}.
The ground state of the Hamiltonian in~\cref{eq:hamiltonian} is completely determined by its covariance matrix $\gamma = \gamma^q \op \gamma^p$, whose Fourier transform is given by
\begin{equation}\begin{aligned}\label{eq:covariance fourier}
  \gamma^q(k) &= \frac{1}{2\omega(k)} \\
  \gamma^p(k) &= \frac{\omega(k)}2.
\end{aligned}\end{equation}
The covariance matrix of an unentangled (uncorrelated) product state is $\frac12 \id$.
Recall that any symplectic linear map $S$ on the set of modes $(q_n, p_n)$ defines a unitary map which maps Gaussian states to Gaussian states.
Under a map of the form $A \op \bigl(A^{\tran}\bigr)^{-1}$ the covariance matrix transforms as
\begin{equation*}\begin{aligned}
  \gamma^q &\mapsto A\gamma^q A^{\tran} \\
  \gamma^p &\mapsto \bigl(A^{\tran}\bigr)^{-1}\gamma^p A^{-1}.
\end{aligned}\end{equation*}

We first normalize such that $\omega(\pi) = 1$, which can be implemented by the symplectic (squeezing) map $(\sqrt{\omega(\pi)}\id) \op (1/\sqrt{\omega(\pi)}\id)$.
Suppose we have filters $(g,h)$ satisfying \cref{eq:wavelet filter phase relation}, then $W = W_g \op W_h$ disentangles the ground state.
To see that this is indeed true, we compute the result of applying the wavelet decomposition map to the ground state covariance matrix $\gamma = \gamma^q\op \gamma^p$ given in terms of the dispersion relation by \cref{eq:covariance fourier}.
For this, we remark that from $g_w(k) = \omega(k)h_w(k)$ it follows that $h_s(k) = \omega(k + \pi)g_s(k)$.
Then,
\begin{equation*}\begin{aligned}
  \omega(k)h_w(k)f^{\text{high}}(2k) &= g_w(k)f^{\text{high}}(2k), \\
  \omega(k)h_s(k)f^{\text{low}}(2k)
  &= g_s(k)\omega^{(1)}(2k)f^{\text{low}}(2k),
\end{aligned}\end{equation*}
where $\omega^{(1)}$ is the renormalized dispersion relation on the scaling output defined in \cref{eq:single layer dispersion} in the main text.
This shows that 
$\omega(k)W_h^{\tran} = W_g^{\tran}(\omega^{(1)} \op \id)$
and hence
\begin{equation*}\begin{aligned}
  W_h \gamma^p W_h^{\tran} &= W_h  W_g^{\tran} (\gamma^{p,(1)} \op \frac12\id) = \gamma^{p,(1)} \op \frac12\id \\
  \gamma^{p,(1)}(k) &= \frac12 \omega^{(1)}(k).
\end{aligned}\end{equation*}
Similarly, it holds that
\begin{equation*}\begin{aligned}
  W_g \gamma^q W_g^{\tran} &= \gamma^{q,(1)} \op \frac12\id \\
  \gamma^{q,(1)}(k) &= \frac{1}{2\omega^{(1)}(k)}.
\end{aligned}\end{equation*}
We thus see that $W$ has unentangled the high-frequency modes to a product state, and the low frequency modes are renormalized to have a new dispersion relation $\omega^{(1)}$ given by \cref{eq:single layer dispersion}.

The full entanglement renormalization circuit consists of repeated applications of such layers.
To introduce some notation, we let $\omega^{(l)}$ be the dispersion relation after $l$ layers of renormalization, recursively defined by (cf.~\cref{eq:single layer dispersion}, note that we first normalize the dispersion relation by a factor $\omega^{(l)}(\pi)$)
\begin{equation}\label{eq:renormalized disperson}
  \omega^{(l+1)}(k) = \frac{\omega^{(l)}(\frac{k}{2})}{\omega^{(l)}(\pi)} \frac{\omega^{(l)}(\frac{k}{2} + \pi)}{\omega^{(l)}(\pi)}.
\end{equation}
The normalization by $\omega^{(l)}(\pi)$ could also be absorbed in the filters, but we would like the filters to be such that $g(0) = h(0) = \sqrt2$, as is standard in the signal processing literature and convenient for the analysis.
Then at the $l$-th layer we need filters $g^{(l)}$,~$h^{(l)}$ satisfying $g^{(l)}_w(k) = \frac{\omega^{(l)}(k)}{\omega^{(l)}(\pi)}h^{(l)}_w(k)$ (cf.~\cref{eq:wavelet filter phase relation}), and we~let
\begin{equation}\begin{aligned}\label{eq:filter plus squeezing}
  R_{g^{(l)}} &= W_{g^{(l)}} \sqrt{\omega^{(l)}(\pi)}, \\
  R_{h^{(l)}} &= W_{h^{(l)}} \frac1{\sqrt{\omega^{(l)}(\pi)}}.
\end{aligned}\end{equation}
Finally, we define the $\cL$-layer renormalization map as $R^{(\cL)} = R_g^{(\cL)} \op R_h^{(\cL)}$, where
$R_a^{(\cL)} = (R_{a^{(\cL-1)}} \op \id^{\op(\cL - 1)}) \circ \ldots \circ (R_{a^{(1)}} \op \id) \circ R_{a^{(0)}}$
for $a = g,h$.
Then, $R^{(\cL)}$ maps the state with dispersion relation $\omega$ to a product state with covariance matrix $\frac12\id$ on the~$\cL$ high frequency levels, and a state with dispersion relation $\omega^{(\cL)}$ on the remaining low frequency level.

\section{Entanglement renormalization circuits}\label{sec:circuits}
If $g$ and $h$ are FIR filters of size $2M$, we show in~\cref{sec:circuit construction} that $W$ gives rise to a Gaussian circuit of depth $M$ that maps the low-frequency modes to the odd sublattice and the high-frequency modes to the even sublattice as shown in \cref{fig:circuit decomposition}.
This is exactly the structure of an entanglement renormalization circuit.
The converse to this construction is also true: any Gaussian entanglement renormalization circuit as described above arises in this way.

\begin{figure}[t]
\centering
\quad\qquad\begin{overpic}[width=0.45\textwidth,grid=false]{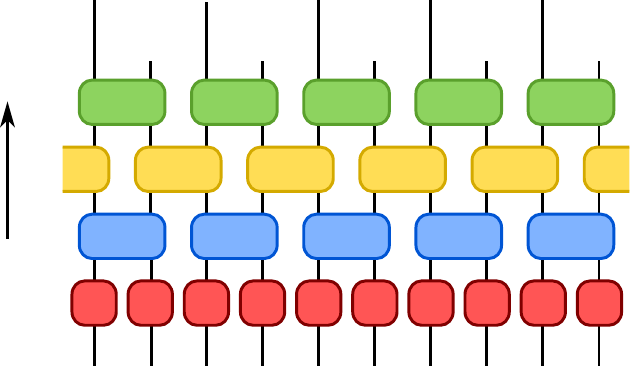}
\put(-10,26){$W$}
\put(-15,9){\scriptsize{squeezing}} 
\put(52,40.5){$a_1$}
\put(43,30){$a_2$}
\put(52,19.5){$a_3$}
\end{overpic}
\caption{Decomposition of a single layer of the entanglement renormalization map $R^{(1)}$ as a circuit.
The wavelet transform $W=W_g \op W_h$ is decomposed as a circuit with two-local gates $a_i$, and follows the bottom layer which squeezes by~$\omega(\pi)^{\frac12} \op \omega(\pi)^{-\frac12}$ to normalize the dispersion relation.
This figure can be interpreted both as a linear circuit implementing a symplectic transformation, and as its second quantization, which is a bosonic Gaussian circuit.}
\label{fig:circuit decomposition}
\end{figure}

When using a finite depth circuit, we may not be able to satisfy the relation in \cref{eq:wavelet filter phase relation} exactly if $\omega(k)$ is not a ratio of trigonometric polynomials.
In particular, this is the case for the harmonic chain.
In this case we can still hope to \emph{approximate} the dispersion relation, and correspondingly prepare a state that is close the true ground state.
This raises two interesting questions.
Firstly, the existence of filters that approximately satisfy \cref{eq:wavelet filter phase relation} is not clear.
In \cref{sec:filter construction} we describe an explicit procedure for constructing such filters.
Secondly, one can wonder whether a good approximation of the dispersion relation at the level of a single layer will indeed give rise to a good approximation of the ground state.
Fortunately, the structure of entanglement renormalization is remarkably robust to small errors~\cite{kim2017robust, borregaard2019noise}, and in~\cite{haegeman2018rigorous} a robustness result for wavelet based fermionic entanglement renormalization was proven.
The bosonic setting is somewhat different, as the Hilbert spaces are infinite dimensional.
We will now discuss that when the family of filters has a well-defined `continuum limit', we can nevertheless prove a rigorous approximation theorem.

\change{We would like to bound the approximation error when using $\cL$ layers of entanglement renormalization}.
Suppose we are given a family of filter pairs $(g^{(l)},h^{(l)})$ for $l=1, \ldots, \cL$, where the $l$-th pair represents the $l$-th layer such that
\begin{equation}\label{eq:approximate phase main}
  \bigl\lvert g^{(l)}_w(k) - \frac {\omega^{(l)}(k)}{\omega^{(l)}(\pi)} h^{(l)}_w(k) \bigr\lvert \leq \eps
\qquad \forall l=1,\dots, \cL,
\end{equation}
so they approximately reproduce the dispersion relation at each layer (up to normalization).
Moreover, we need these families of filters to give rise to a `stable' wavelet decomposition, in the sense that many iterations of the decomposition maps yield a uniformly bounded map.
This is a standard assumption in wavelet theory.
If we are only interested in an approximation, and the theory flows to either a critical theory or a trivial theory we only need a small number of `transition layers' and can pick fixed filters $(g^{(l)},h^{(l)}) = (g,h)$ for large $l$.
In \cref{sec:approximation}, we prove a general approximation theorem in this setting, which applies to an arbitrary quadratic Hamiltonian.
We measure the error in the two-point functions $\langle p_i p_j \rangle$ and~$\langle q_i q_j \rangle$ (or covariance matrix).
In the particular case of the harmonic chain in \cref{eq:harmonic chain ham}, our result specializes as follows.

\begin{thm*}[Informal]\label{thm:harmonic chain}
For the harmonic chain with mass~$m$, the approximation error \change{using the MERA state resulting from $\cL$ layers of entanglement renormalization} is bounded by
\begin{equation*}\begin{aligned}
  \abs{\langle p_i p_j \rangle_{\exact} - \langle p_i p_j \rangle_{\mera}} &\leq \bigl(\mathcal O(2^{-\frac{\cL}{2}}) + \mathcal O(\eps \log\tfrac1\eps) \bigr)\sqrt{m^2 + 1}, \\
  \abs{\langle q_i q_j \rangle_{\exact} - \langle q_i q_j \rangle_{\mera}} &\leq \bigl(\mathcal O(2^{-\frac{\cL}{2}}) + \mathcal O(\eps \log\tfrac1\eps)\bigr)\frac{1}{m},
\end{aligned}\end{equation*}
the latter assuming $m>0$. In the massless case, the latter bound is replaced by
\begin{equation*}
  \abs{\langle q_i q_j \rangle_{\exact} - \langle q_i q_j \rangle_{\mera}} \leq \bigl(O(2^{-\frac{\cL}{2}}) + \mathcal O(\eps \log\tfrac1\eps\bigr)\bigr)\sqrt{\lvert i-j \rvert}.
\end{equation*}
\end{thm*}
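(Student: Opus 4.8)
The plan is to reduce everything to an estimate for the covariance matrices and then to separate two independent sources of error. Write $\gamma = \gamma^q \op \gamma^p$ for the true ground-state covariance given by \cref{eq:covariance fourier}, and $\gamma_{\mera}$ for the covariance of the state obtained by running the $\cL$-layer circuit $R^{(\cL)}$ backwards on the product state $\frac12\id$; since $R^{(\cL)}$ is symplectic we have $R^{(\cL)}\gamma_{\mera}(R^{(\cL)})^{\tran} = \frac12\id$, so it suffices to bound $R^{(\cL)}\gamma(R^{(\cL)})^{\tran} - \frac12\id$ and then pull the bound back through $(R^{(\cL)})^{-1}$, using that the stability assumption supplies a bound $\wnorm$ on the iterated (inverse) wavelet maps uniform in $\cL$. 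By the computation in \cref{sec:derivation filter condition}, if \cref{eq:wavelet filter phase relation} held exactly then $R^{(\cL)}\gamma(R^{(\cL)})^{\tran}$ would equal $\frac12\id$ on the $\cL$ high-frequency blocks and the renormalized covariance $\gamma^{q,(\cL)}\op\gamma^{p,(\cL)}$ on the remaining low-frequency block. Hence the error splits as a \emph{truncation error} (the deviation of $\gamma^{q,(\cL)}\op\gamma^{p,(\cL)}$ from $\frac12\id$ on the bottom block), plus an \emph{accumulated phase error} assembled from the $\eps$-defects of \cref{eq:approximate phase main} at each layer. Throughout one should work with matrix elements $\langle q_iq_j\rangle$, $\langle p_ip_j\rangle$ rather than operator norms, since in the gapless case $\gamma^q$ is unbounded.

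For the truncation error I would use the explicit renormalization of the dispersion relation. In the massive case, \cref{eq:mass renormalization} and \cref{eq:single layer dispersion} show that the normalized dispersion $\omega^{(l)}(\,\cdot\,)/\omega^{(l)}(\pi)$ flows to the constant function $1$: the coefficient multiplying $\sin^2(k/2)$ contracts geometrically while the mass term does not, so $\norm{\omega^{(\cL)}/\omega^{(\cL)}(\pi) - 1}_\infty = \bigo(2^{-\cL})$ and therefore the bottom-block covariance is $\frac12\id$ up to an error of the same order; converting this into the two-point functions and restoring the overall squeezing factor $\sqrt{\omega^{(\cL)}(\pi)}\op 1/\sqrt{\omega^{(\cL)}(\pi)}$ from \cref{eq:filter plus squeezing} together with the decay of the iterated scaling filters produces the claimed $\bigo(2^{-\cL/2})\sqrt{m^2+1}$ and $\bigo(2^{-\cL/2})/m$. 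In the massless case $\omega(k) = \abs{\sin(k/2)}$ is a fixed point of \cref{eq:single layer dispersion}, so $\gamma^{q,(\cL)}(k) = \frac1{2\abs{\sin(k/2)}}$ is the same at every scale and does not approach $\frac12$ near $k=0$; here the bottom-block modes have to be mapped back to the sites $i,j$ through the $\cL$-fold iterated scaling filters, and one estimates $\sum_{a,b} h^{(\cL)}_s[i-2^{\cL}a]\,(\gamma^{q,(\cL)}_{ab}-\tfrac12\delta_{ab})\,h^{(\cL)}_s[j-2^{\cL}b]$ using the decay of the scaling function and the slow (logarithmic) growth of the massless two-point function; this is what yields the $\bigo(2^{-\cL/2})\sqrt{\abs{i-j}}$, while the $p$-block again converges as in the massive case.

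For the accumulated phase error I would promote \cref{eq:approximate phase main} to operator identities: at layer $l$ one gets $\omega^{(l)}(k)\,W_{h^{(l)}}^{\tran} = W_{g^{(l)}}^{\tran}(\omega^{(l+1)}\op\id) + \Delta_w^{(l)}$ with $\norm{\Delta_w^{(l)}} = \bigo(\eps)$, together with the companion defect relating $h^{(l)}_s$ to $\omega^{(l)}(\,\cdot+\pi)g^{(l)}_s$ obtained from $g_w(k) = e^{-ik}\overline{h_s(k+\pi)}$ and $h_w(k) = e^{-ik}\overline{g_s(k+\pi)}$, which is again of size $\bigo(\eps)$. Feeding these into the single-layer computation of $W\gamma W^{\tran}$ from \cref{sec:derivation filter condition} shows that one layer sends $\gamma$ to $\gamma^{(1)}\op\frac12\id$ plus an error whose weight is bounded by $C\eps$ on the $p$-block but only by $C\eps/\omega^{(l)}(k)$ on the $q$-block. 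Iterating over all layers, using the uniform stability bound $\wnorm$ so the per-layer errors do not amplify, the $p$-contributions sum to $\bigo(\eps)$, while the $q$-contributions, because $1/\omega^{(l)}$ diverges like $1/k$ at the gapless point whereas \cref{eq:approximate phase main} controls only the defect itself, must be integrated down to the frequency scale $\sim\eps$ at which that constraint becomes binding, producing the extra factor $\log\tfrac1\eps$; the mass and distance prefactors are inherited exactly as for the truncation error. The main obstacle is precisely this infrared bookkeeping in the gapless case: because $\gamma^q$ is an unbounded operator one cannot argue with operator norms alone, and one must instead localize each layer's $\eps$-error in frequency and exploit the spatial decay of the biorthogonal scaling and wavelet functions --- which is where the stability hypothesis and the finite support $2M$ of the filters enter --- to show the scale-by-scale errors sum with the stated distance dependence; a secondary point is to make the ``finitely many transition layers, then a fixed filter for all large $l$'' scheme precise so that the infinitely many deep layers contribute only a convergent tail.
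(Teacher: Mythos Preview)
Your decomposition into a truncation piece and an accumulated phase piece is the right shape, and working with matrix elements rather than operator norms is exactly how the paper copes with the unbounded~$\gamma^q$. However, you have misidentified where the $\log\tfrac1\eps$ comes from, and this leads to a genuine gap.

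You claim the per-layer $p$-errors ``sum to $\bigo(\eps)$'' because stability prevents amplification. But stability only says the iterated maps have norm $\leq\wnorm$; it does not make the layer contributions decay. Introduce the exact filter $\tilde g^{(l)}_w(k):=\frac{\omega^{(l)}(k)}{\omega^{(l)}(\pi)}\,h^{(l)}_w(k)$; then \cref{eq:approximate phase main} gives $\norm{W_{g^{(l)}}-W_{\tilde g^{(l)}}}\leq 2\eps$, and a telescoping sum yields $\norm{W_g^{(\cL')} - W_{\tilde g}^{(\cL')}}\leq 2\eps\,\cL'\wnorm^2$. So the phase error after $\cL'$ layers is $\bigo(\eps\cL')$, not $\bigo(\eps)$; running all the way to $\cL$ would give $\eps\cL$, which is not the bound claimed. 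Your separate explanation that the logarithm arises from an infrared frequency integral in the $q$-block is therefore also wrong: the theorem has the same $\eps\log\tfrac1\eps$ in the $p$-bound, where there is no infrared divergence to integrate.

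The paper's mechanism is to introduce an \emph{intermediate} level $\cL'\in\{1,\dots,\cL\}$ and split there. The phase error over the first $\cL'$ layers is $\bigo(\eps\cL'\wnorm^3)$ by the telescoping above. The remaining contribution --- the scaling block at level $\cL'$ --- is controlled not by the renormalized dispersion flowing to a constant (your massive-case argument, which fails outright at the massless fixed point), but by the purely kinematic bound
\[
  \norm{p^{(\cL')}_s W_a^{(\cL')}\delta_n}\;\leq\;2^{-\frac{\cL'-1}{2}}B^2M^{3/2},
\]
obtained from $(p^{(\cL')}_s W_a^{(\cL')}\delta_n)[m]=\langle\phi^b_{0,n},\phi^a_{\cL',m}\rangle$ together with the compact support and uniform $L^\infty$-bound on the scaling functions. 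This estimate is the same in the massive and massless cases, so no separate argument is needed. Optimizing $\cL'\approx 2\log_2\tfrac{C}{\eps}$ balances $\eps\cL'$ against $2^{-\cL'/2}$ and produces the $\eps\log\tfrac1\eps$ in all three bounds; the $2^{-\cL/2}$ survives only through the constraint $\cL'\leq\cL$. The mass and distance prefactors enter at the very end, through the single elementary integrals $\norm{\gamma^q\delta_0}^2=\int_{-\pi}^{\pi}\omega(k)^{-2}\,\d k\leq 2\pi/m^2$ and $\norm{\gamma^q(\delta_i-\delta_j)}^2=\int_{-\pi}^{\pi}\sin^2\!\bigl(\tfrac{(i-j)k}{2}\bigr)\big/\sin^2\!\bigl(\tfrac{k}{2}\bigr)\,\d k=\bigo(\abs{i-j})$, not through a layer-by-layer infrared analysis.
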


In the massless case, there is an IR divergence and $\langle q_i q_j \rangle$ is only defined up to a constant, so we define $\langle q_i q_j \rangle$ by subtracting the divergence; see \cref{eq:subtract divergence} in \cref{sec:approximation} for details.

The intuition behind the proof is that the contribution of the $\cL$-th layer to the correlation function is bounded by $\mathcal{O}(2^{-\frac{\cL}{2}})$, so we need $\mathcal{O}(\log\frac{1}{\delta})$ layers to get within error $\delta$ (even with perfect filters), while each layer contributes a factor of~$\eps$ to the error in the filter relation.
Balancing these two contributions yields the desired bound.
In \cref{sec:filter construction} we provide a construction of filters $g, h$ satisfying \cref{eq:wavelet filter phase relation} for the massless harmonic chain.
This construction depends on two parameters~$K$ and~$L$, where~$K$ controls the number of vanishing moments of the filters and $L$ controls the accuracy of the approximation of the dispersion relation.
This corresponds to a circuit depth of $M = K = 2L$ for a single layer.
\change{In \cref{fig:correlation functions} we illustrate the approximation result by numerically computing correlation functions of the massless harmonic chain using these filters~\cite{pyfermions}.}

If we denote by $M$ the circuit depth of a single layer, then we find numerically that $\eps$ is exponentially small as a function of $M$, whereas the other wavelet-dependent parameters we have suppressed above only grow polynomially. Hence, the total required depth of a single layer of entanglement renormalization for a desired error is polylogarithmic in~$\frac{1}{\eps}$.
This shows that our entanglement renormalization circuits prepare the ground state very efficiently: a circuit of depth $\mathcal{O}(\operatorname{polylog}(\frac{1}{\delta}))$ achieves an accuracy~$\delta$ on the correlation functions.

\begin{figure}
\centering
\includegraphics[width=0.85\textwidth]{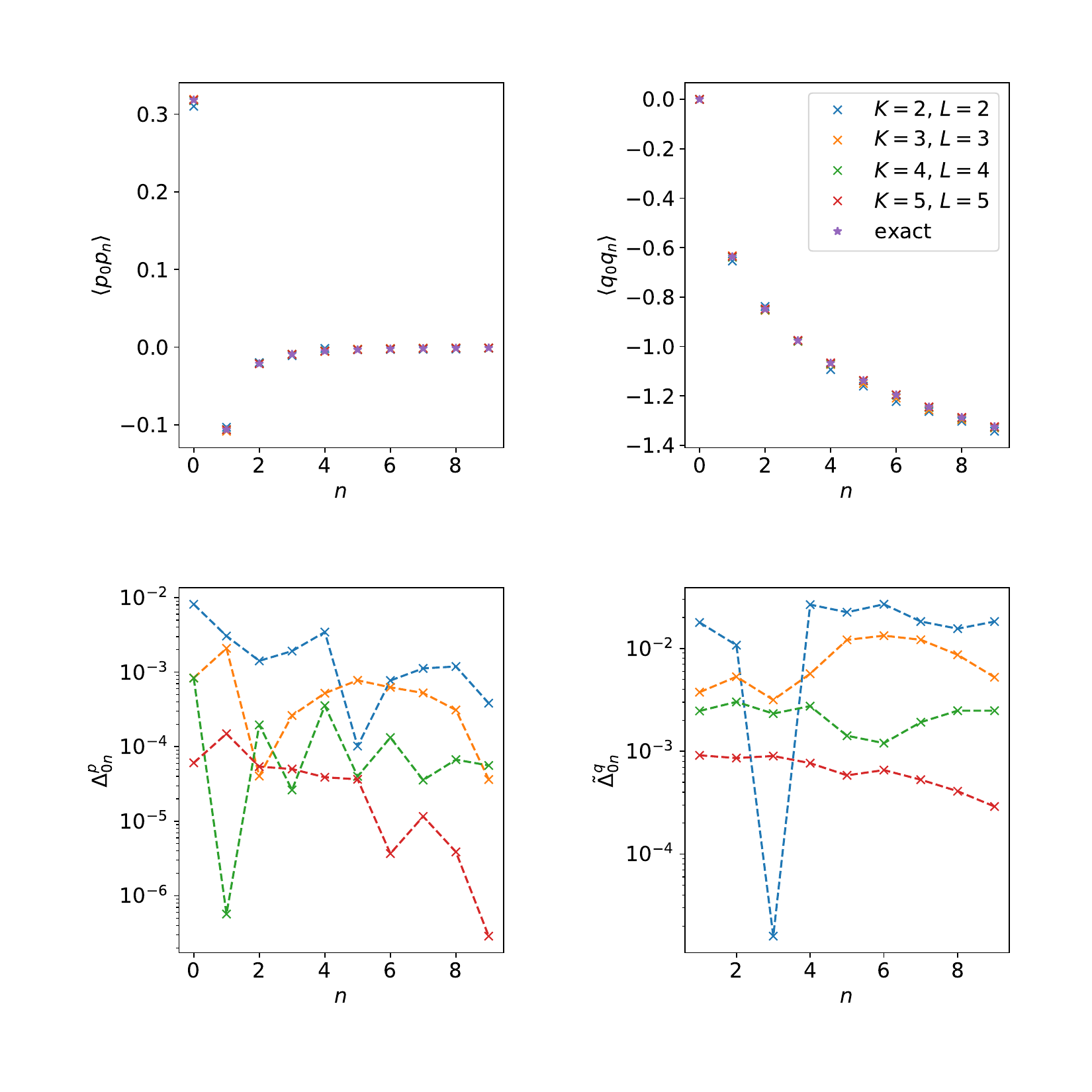}
\caption{Approximation of correlation functions for the massless harmonic chain by the MERA.
We used the filter construction of \cref{sec:filter construction} and $\mathcal L = 20$ layers of renormalization.
The former depends on parameters~$K$ and~$L$ which are explained in the main text.
We show the correlation functions~$\langle p_0 p_n \rangle$ and~$\langle p_0 p_n \rangle$, as well as their approximation errors~$\Delta^p_{0n} := \abs{\langle p_0 p_n \rangle_{\exact} - \langle p_0 p_n \rangle_{\mera}}$ and $\tilde\Delta^q_{0n} := \abs{\langle q_0 q_n \rangle_{\exact} - \langle q_0 q_n \rangle_{\mera}}$.}
\label{fig:correlation functions}
\end{figure}

\section{The continuum limit}\label{sec:continuum}
The discrete wavelet transform has a natural continuum limit, in terms of continuous scaling and wavelet functions.
This gives a way to interpret the continuous limit of the entanglement renormalization maps.
In the following, we demonstrate that the scaling functions are a natural UV cut-off that is compatible with the entanglement renormalization circuits, and in the critical case we find that we can reproduce certain conformal data exactly from a single layer of renormalization.
We consider the free boson, described by bosonic fields $\phi(x)$, $\pi(x)$ and Hamiltonian
\begin{equation*}
  H = \frac12 \int \d x \, \pi(x)^2 + m^2\phi(x)^2 + \bigl(\partial\phi(x)\bigr)^2.
\end{equation*}
We are particularly interested in the massless case, which gives rise to a conformal field theory.

\subsection{Scaling and wavelet functions}
The continuum limit of the discrete biorthogonal wavelet transform is determined by the \emph{scaling functions}.
Given biorthogonal wavelet filters $g,h$ the associated scaling functions are defined in Fourier space for $a = g,h$ by
\begin{equation}\label{eq:scaling function}
  \hat{\phi}^a(k) = \prod_{n=1}^{\infty}\frac{a_s(2^{-n}k)}{\sqrt2}
\end{equation}
and the associated \emph{wavelet functions} by
\begin{equation}\label{eq:wavelet function}
  \hat{\psi}^a(k) = \frac{1}{\sqrt2} a_w\Bigl(\frac{k}{2}\Bigr)\hat{\phi}^a\Bigl(\frac{k}{2}\Bigr).
\end{equation}
Both have compact support if the filters are finite, an example is shown in~\cref{fig:wavelet functions}.
Moreover, we can define rescaled and shifted versions
\begin{equation*}\begin{aligned}
  \psi^a_{l,n}(x) &= 2^{-\frac{l}{2}}\psi^a(2^{-l}x - n),\\
  \phi^a_{l,n}(x) &= 2^{-\frac{l}{2}}\phi^a(2^{-l}x - n).
\end{aligned}\end{equation*}
It then follows that the sets $\{\psi^g_{l,n}\}_{l,n \in \ZZ}$ and $\{\psi^h_{l,n}\}_{l,n \in \ZZ}$ form a dual basis, in the sense that
\begin{equation*}
  \braket{\psi^g_{l,n}, \psi^h_{l',n'}} = \int \d x \, \psi^g_{l,n}(x)\psi^h_{l',n'}(x) = \delta_{l,l'}\delta_{n,n'}.
\end{equation*}
Moreover,
\begin{equation*}
  \braket{\phi^g_{l,n}, \phi^h_{l,n'}} = \int \d x \, \phi^g_{l,n}(x)\phi^h_{l,n'}(x) = \delta_{n,n'}.
\end{equation*}
If the filters are finite and the scaling functions are square-integrable functions (which is closely related to the discrete wavelet decomposition being sufficiently stable) the sets $\{\psi^g_{l,n}\}_{l,n \in \ZZ}$ and $\{\psi^h_{l,n}\}_{l,n \in \ZZ}$ form a Riesz basis of $L^2(\RR)$~\cite{cohen1992biorthogonal}.
\change{This means that we can write any function $f \in L^2(\RR)$ as
\begin{equation*}
  f = \sum_{l,n} \braket{\psi^g_{l,n}, f} \psi^h_{l,n} = \sum_{l,n} \braket{\psi^h_{l,n}, f} \psi^g_{l,n}.
\end{equation*}
By construction of the scaling and wavelet functions, these are such that if
\begin{equation*}
  f = \sum_{n} s[n] \phi^g_{0,n}
\end{equation*}
then we can rewrite
\begin{equation}\label{eq:cont to discrete}
  f = \sum_{l=0}^{\cL-1} \sum_n w[l,n]\psi^g_{l,n} + \sum_n \tilde{s}[n] \phi^g_{\cL,n}
\end{equation}
where we find the coefficients $w[l,n]$ and $\tilde{s}[n]$ precisely by applying the discrete wavelet transformation $W_h$ to the signal $s$.
Moreover, if we let
\begin{align}\label{eq:convergence scaling projection}
  f^h_l &= \sum_n \braket{\phi^g_{l,n}, f} \phi^h_{l,n}\\
  f^g_l &= \sum_n \braket{\phi^h_{l,n}, f} \phi^g_{l,n}
\end{align}
then $f^h_l$ and $f^g_l$ converge in norm to $f$ as $l$ goes to minus infinity.
See \cite{mallat2008wavelet} for an introduction to scaling and wavelet functions and their properties.
}

\subsection{Entanglement renormalization for the massless boson}

\change{We now suppose that the biorthogonal wavelet filters $g,h$ are related by the dispersion relation of the massless harmonic chain that was discussed in \cref{sec:ent norm}, that is,
\begin{align}\label{eq:filter assm}
  g_w(k) = \abs*{\sin\Bigl(\frac{k}{2}\Bigr)} h_w(k).
\end{align}
We claim that, in this case, the scaling functions defined in \cref{eq:scaling function} are related as
\begin{align}\label{eq:scaling fn claim}
  \hat{\phi}^g(k) = \frac{\lvert k \rvert}{2 \lvert \sin(\frac{k}{2})\rvert} \hat{\phi}^h(k).
\end{align}
To verify this claim, we note that as a consequence of \cref{eq:filter assm} and the relation in \cref{eq:scaling vs wavelet} we have $h_s(k) = \lvert\cos(\frac{k}{2})\rvert g_s(k)$.
Next, from \cref{eq:scaling function} it follows that $\hat{\phi}^h(k) = \gamma(k) \hat{\phi}^g(k)$, where
\begin{align*}
  \gamma(k) = \prod_{n=1}^{\infty} \abs{\cos(2^{-n-1}k)}.
\end{align*}
This expression implies that $\gamma(k)$ has to satisfy $\gamma(k) = \abs{\cos(\frac{k}{4})}\gamma(\frac{k}{2})$, and we can easily verify that $\gamma(k) = \frac{2\abs{\sin(\frac{k}{2})}}{\abs{k}}$, which has the right normalization $\gamma(0) = 1$.
This proves \cref{eq:scaling fn claim}, which in turn, using \cref{eq:filter assm,eq:wavelet function}, also implies that
\begin{align}
\nonumber
  \hat{\psi}^g(k) &= \frac{1}{\sqrt2}g_w\Bigl(\frac{k}{2}\Bigr)\hat{\phi}^g\Bigl(\frac{k}{2}\Bigr)\\
\nonumber
  &= \frac{1}{\sqrt2}\abs*{\sin\Bigl(\frac{k}{4}\Bigr)}h_w\Bigl(\frac{k}{2}\Bigr)\frac{\lvert k \rvert}{4 \lvert \sin(\frac{k}{4})\rvert}\hat{\phi}^h\Bigl(\frac{k}{2}\Bigr)\\
\label{eq:dispersion yay}
  &= \frac{\lvert k \rvert}{4}\hat{\psi}^h(k).
\end{align}}%
\Cref{eq:dispersion yay} shows that wavelet functions are related precisely by the linear dispersion relation of the massless free boson.

We consider correlation functions of smeared fields $\phi(f) = \int \d x\, f(x) \phi(x)$, $\pi(f) = \int \d x\, f(x) \pi(x)$.
\change{First we consider the case where we have smeared fields $\phi(f)$ with $f$ of the form $f = \sum_{n} s[n] \phi^{g}_{l,n}$ and $\pi(\tilde{f})$ with $\tilde{f}$ of the form $\tilde{f} = \sum_{n} \tilde{s}[n] \phi^{h}_{l,n}$, then because the wavelet functions are precisely related by the correct dispersion relation, in order to compute correlation functions, it suffices to express the functions $f$ and $\tilde{f}$ in the wavelet bases $\{\psi^{h}_{l',n'}\}$ and $\{\psi^{g}_{l',n'}\}$.
To see this it suffices to look at two-point functions, and suppose that we want to compute $\langle \pi_1(f_1)\pi(f_2)\rangle$, where $f_i = \sum_{n} s_i[n] \phi^{h}_{l,n}$.
Then, if we rewrite $f_i = \sum_{l,n} w_i[l,n] \psi^{h}_{l,n}$ and we denote by $H$ the operator which is such that it acts as multiplication with $\frac{1}{4}\abs{k}$ in the Fourier domain, so $H \psi^{h} = \psi^{g}$ and hence $2^l H \psi^{h}_{l,n} = \psi^{g}_{l,n}$, then
\begin{align*}
  \sum_{l,n} 2^{-l} w_1[l,n]w_2[l,n] &= \braket{\sum_{l,n} w_1[l,n] \psi^{h}_{l,n}, \sum_{l',n'} 2^{-l'} w_2[l',n'] \psi^{g}_{l',n'}}\\
  &= \braket{\sum_{l,n} w_1[l,n] \psi^{h}_{l,n}, \sum_{l',n'} w_2[l',n'] H \psi^{h}_{l',n'}}\\
  &= \braket{f_1, Hf_2}
\end{align*}
which is indeed the correct correlation function.
A similar computation holds for correlation functions involving the field $\phi$.
However, by \cref{eq:cont to discrete} the $w_i[l,n]$ are computed from $s_i[n]$ precisely by applying the discrete wavelet transform, and the factor of $2^{-l}$ derives from our normalization of the dispersion relation (the `squeezing layer' in \cref{fig:circuit decomposition}).
In other words, the correlation functions will be given precisely by applying the entanglement renormalization circuit to the operators $\sum_n s[n] q_n$ and $\sum_n \tilde{s}[n] p_n$.
For general functions $f$, we may approximate them with scaling functions as in \cref{eq:convergence scaling projection} and thus map
\begin{equation}\begin{aligned}\label{eq:discretize}
  \phi(f) &\mapsto \sum_n \braket{\phi^{h}_{l,n}, f} q_n, \\
  \pi(f) &\mapsto \sum_n \braket{\phi^{g}_{l,n}, f} p_n.
\end{aligned}\end{equation}
(the inner product here is again the $L^2(\RR)$ inner product).
The scale~$l$ corresponds to a choice of UV cut-off.
If $l$ is sufficiently small, then by \cref{eq:convergence scaling projection} on can then compute correlation functions using the (discrete) entanglement renormalization circuit to good approximation.}
This approach is completely analogous to the fermionic construction described in detail in \cite{witteveen2019quantum}.
It yields a natural way to interpret the continuous limit of bosonic entanglement renormalization as quantum field theory.

As an application, we can consider the entanglement renormalization superoperator~$\Phi$, which coarse-grains operators by conjugating with a single layer of the renormalization circuit.
For critical lattice models, $\Phi$ has been proposed to approximately encode the conformal data of the continuum limit of the theory~\cite{evenbly2013quantum}.
For instance, for a primary field in the conformal field theory with scaling dimension $\lambda$, there should be a local operator~$O$ such that $\Phi(O) \approx 2^{-\lambda} O$.
We will now verify that the entanglement renormalization superoperator reproduces \emph{exactly} the scaling dimensions of the $\phi$ and $\pi$ fields in the massless case, as well as the scaling dimension of a number of descendants (equal to the number of vanishing moments of the wavelet filters), similar as for the fermionic wavelet MERA~\cite{evenbly2016entanglement,witteveen2019quantum}.
This is seen by considering the operators $O_{\phi}(x) = \sum_n \phi^h(x - n)q_n$ and $O_{\pi}(x) = \sum_n \phi^g(x - n)p_n$ for any $x \in \RR$, which are the discretizations of the operators $\phi(x)$ and $\pi(x)$.
It can be easily seen that the entanglement renormalization superoperator maps these operators
\begin{equation*}\begin{aligned}
   O_{\phi}(x) \mapsto& \sum_{n, l} \sqrt{2}h_s[l]\phi^h(x - 2n - l)q_n\\
   &= \sum_n \phi^h\Bigl(\frac{x}{2} - n\Bigr) q_n = O_{\phi}\Bigl(\frac{x}{2}\Bigr)
\end{aligned}\end{equation*}
where we use that for the scaling function \cref{eq:scaling function} it holds that~\cite{mallat2008wavelet}
\begin{align}\label{eq:rescaling scaling function}
   \frac{1}{\sqrt2}\phi^h(\frac{x}{2}) = \sum_n h_s[n]\phi^h(x - n).
\end{align}
Similarly one finds $O_{\pi}(x) \mapsto  \frac{1}{2}O_{\pi}(\frac{x}{2})$.
This corresponds, as expected, to scaling dimensions 0 and 1.
If the scaling function is differentiable, we see that by differentiating \cref{eq:rescaling scaling function} we get that $\frac{1}{2\sqrt2}\partial_x\phi^h(\frac{x}{2}) = \sum_n h_s[n]\partial_x\phi^h(x - n)$, which leads similarly to a descendent field $O_{\phi^{(1)}} = \sum \partial_x\phi^{h}(x - n)q_n$ with the right scaling dimension.
It turns out that if $\phi^h$ has $K$ vanishing moments (or equivalently, a factor $(1 + e^{ik})^K$ in the scaling filter~$h_s$~\cite{mallat2008wavelet}), then there exists a vector $\phi^{h,l}[n]$ with $l = 1, \ldots, K$ and $n$ taking integer values on the support of the wavelet such that
\begin{align*}
  \frac{1}{2^l\sqrt2}\phi^{h,l}[m] = \sum_n h_s[n]\phi^{h,l}[2m - n]
\end{align*}
even if $\phi^h$ is not $l$ time differentiable (note that $\phi^{h,l}$ is only defined at integer values), see Theorem~7.1 in~\cite{strang1996wavelets}, and similarly for $\phi^{g,l}$.
This shows that computing the eigenvalues of the entanglement renormalization superoperator~$\Phi$ will yield the eigenvalues of $K$ descendants of the $\phi$ and $\pi$ fields.
At this point we observe that a wavelet filter leading to $K$ vanishing moments must have support at least $2K$, so one needs (as expected) a larger circuit depth to capture more descendent scaling dimensions.
For example, in our explicit constructions in \cref{sec:filter construction} the filter size is $2K + 4L$ where $L$ controls the accuracy of the approximation of the dispersion relation.

\subsection{The massive bosonic field}
The free massive boson with mass $m$ can be approached similarly.
In that case, we suppose we have two families of filters $g^{(l)}$ and $h^{(l)}$, now with $l \in \ZZ$ and such that $\sqrt{(m^{(l)})^2 + 1} \, g^{(l)}_w(k) = \sqrt{(m^{(l)})^2 + \sin^2(\frac{k}{2})} \, h^{(l)}_w(k)$ where $m^{(0)} = m$ and $m^{(l)}$ is the mass after $l$ layers of renormalization, as defined by \cref{eq:mass renormalization}.
If these filters are chosen in a way that they converge to a fixed \emph{orthonormal} filter as $l$ goes to infinity, and to a fixed pair of biorthogonal filters as in the massless case for $l$ to $-\infty$, it makes sense to define a new type of scaling and wavelet functions which are different at each level $l$ as a generalization of of the scaling and wavelet functions:
\begin{equation}\begin{aligned}\label{eq:adaptive scaling functions}
  \hat{\phi}^a_l(k) &= \prod_{j=1}^{\infty} \frac{a^{(l + j)}(2^{-j}k)}{\sqrt2} \\
  \hat{\psi}^a_l(k) &= \frac{1}{\sqrt2} a^{(l+1)}\Bigl(\frac{k}{2}\Bigr)\hat{\phi}^a_{l+1}\Bigl(\frac{k}{2}\Bigr)
\end{aligned}\end{equation}
for $a = g,h$ as a generalization of of the scaling and wavelet functions defined in \cref{eq:scaling function} and \cref{eq:wavelet function}.
Again, the wavelet functions $\psi^a_{l,n}(x) = 2^{-\frac{l}{2}}\psi_l(2^{-l}x - n)$ for $a = g,h$ form a dual basis (provided they exist).
The behaviour for~$l \rightarrow \pm \infty$ is consistent with the fact that the mass term is a relevant perturbation of the conformal field theory and the theory flows from a critical massless boson to a trivial theory.
As before, we can now discretize the theory using the scaling functions at some given scale and use the discrete circuit to compute correlation functions.

\subsection{Other perspectives}\label{sec:other perspectives}
The idea that wavelet theory should be a natural tool to discretize a field theory in order to perform renormalization has a long history~\cite{battle1999wavelets}.
As mentioned in the introduction, our approach differs from other works such as~\cite{bulut2013wavelets, brennen2015multiscale, stottmeister2020operator} which investigate the use of wavelets to discretize quantum field theories, in that we use biorthogonal wavelets, which moreover are specifically designed to target the Hamiltonian of the field theory.
There is also a different approach to entanglement renormalization for quantum field theories, known as cMERA~\cite{haegeman2013entanglement, franco2017entanglement}.
This takes a different perspective by formulating a variational class of states directly in the continuum, rather than considering a discretization.
In both cases, the correlation functions of the theory are accurately reproduced up to some cut-off.
The precise relation between MERA and cMERA is not very well understood, for instance it is not clear that discretizing a cMERA state could yield a MERA.
Intriguingly, cMERA is formally strongly reminiscent of the \emph{continuous wavelet transform} (CWT).
The continuous wavelet transform~\cite{mallat2008wavelet} can be defined for a much broader class of wavelet functions $\psi$, and if $\psi$ is a biorthogonal wavelet the CWT can be discretized to a discrete wavelet transform.
Reformulating cMERA as the second quantization of a CWT would therefore give a clear relationship between MERA and cMERA for free bosonic systems.
\change{A starting point could be the cMERA in \cite{zou2019magic}, which reproduces some scaling dimensions exactly.}
However, the CWT appears to break some of the symplectic properties of the discrete biorthogonal wavelet transform and it remains an open problem to make this connection more explicit.
Finally, another reason why the field theory limit of entanglement renormalization is of interest is its tentative relation to holography in theories of quantum gravity, as conjectured in~\cite{swingle2012entanglement}.
The entanglement renormalization circuit can be thought of as mapping a system into one higher dimension by adding an additional `scale' direction.
An interpretation in terms of wavelets was proposed in~\cite{qi2013exact} for fermions and extended to bosonic systems in~\cite{singh2016holographic}.

\section{Conclusion}
In this work we have explained how Gaussian entanglement renormalization circuits can be naturally contructed from (and are in fact equivalent to) the second quantizations of biorthogonal wavelet transforms.
\change{There are a few technical aspects that would be interesting to study in more detail.
First, one could carry out a fully rigorous analysis of the continuum limit discussed in \cref{sec:continuum}, as in~\cite{witteveen2019quantum}.
This poses some mathematical challenges.
For example, if the system is not scale invariant then our notion of wavelet functions goes beyond the standard framework of wavelet theory, and one would have to identify suitable conditions on the filters that ensure that the wavelet and scaling functions as defined in \cref{eq:adaptive scaling functions} are well-behaved functions and that standard wavelet theory generalizes.}
\change{Second, it would be desirable to identify conditions under which the procedure outlined in \cref{sec:filter construction} is rigorously guaranteed to find good approximate solutions of~\cref{eq:approximate phase main}.
We note that even for Hilbert pair wavelets, which are relevant in the fermionic setting and which inspired our construction, this is not known and a subject of recent research in the signal processing community \cite{selesnick2002design, achard2017new}.}

Overall, we believe that this work, together with~\cite{evenbly2016entanglement,haegeman2018rigorous} for the fermionic case, completes our conceptual understanding of Gaussian entanglement renormalization for free theories as the second quantization of wavelet decompositions.
We hope that this offers a path towards constructing and analyzing entanglement renormalization circuits for \emph{interacting} models.
One clear direction is to apply perturbation theory in the wavelet basis.
A similar approach has already been taken for cMERA in \cite{cotler2019entanglement}, where one can also do perturbation theory around a Gaussian cMERA.
Another interesting direction is to investigate integrable models, where we know explicit solutions for the ground state, and try to formulate these in terms of wavelet modes.
Finally, continuous wavelet transforms might also help illuminate the relation between MERA and cMERA as we discussed in \cref{sec:other perspectives}.

\section*{Acknowledgements}
We thank Adri\'an Franco Rubio for inspiring discussions during the inception of this project.

\paragraph{Funding information}
MW acknowledges funding by NWO Veni grant~680-47-459.

\begin{appendix}
\change{
\section{Review of the fermionic wavelet-MERA correspondence}\label{sec:review fermions}
In this appendix we briefly review the construction of entanglement renormalization circuits for massless free fermions, as worked out in~\cite{evenbly2016entanglement,haegeman2018rigorous,witteveen2019quantum}.
While not strictly needed to understand the results of the current work, which deals with free bosons, it is instructive to contrast the construction and state of the art with the fermionic setting.
We closely follow the exposition in \cite{haegeman2018rigorous}.
Let $a_n$ for $n \in \ZZ$ be fermionic operators, satisfying the anticommutation relations $\{a^\dagger_n, a_m\} = \delta_{n,m}$, $\{a_n, a_m\} = \{a^\dagger_n, a^\dagger_m\} = 0$.
We work in the framework of Gaussian or free fermions.
We consider Hamiltonians of the form
\begin{align}\label{eq:quadratic fermionic hamiltonian}
  H = \sum_{n,m} h_{n,m} a^\dagger_n a_m
\end{align}
where $h$ is a Hermitian matrix.
The ground state of such a Hamiltonian is given by
\begin{align*}
  \ket{\psi} = \prod_i a^\dagger(f_i) \ket{\Omega}
\end{align*}
where $\ket{\Omega}$ is the Fock vacuum and $a^\dagger(f) := \sum_n f[n] a^\dagger_n$ and where the $f_i$ are all eigenvectors of $h$ with negative eigenvalue (so $\sum_m h_{n,m} f_i[m] = \lambda_i f_i[n]$ with $\lambda_i < 0$).
In other words, precisely the negative energy modes are occupied in the ground state.
The set of number-preserving Gaussian operations is given by all evolutions along Hamiltonians of the form in \cref{eq:quadratic fermionic hamiltonian}.
Such transformations are the \emph{fermionic second quantization} of unitaries acting on the single-particle space. That is, to every unitary operator $U$ acting on $\ell^2(\ZZ)$ we associate the unitary which maps $a_n$ to $\tilde{a}_n = \sum_{m} u_{n,m} a_m$.

We restrict to the nearest neighbor hopping Hamiltonian
\begin{align}\label{eq:hopping fermions}
  H = - \sum_{n \in \ZZ} a_n^\dagger a_{n+1} + a_{n+1}^\dagger a_n.
\end{align}
As opposed to the current work on bosonic models, so far no general wavelet construction for arbitrary free fermion models is known, but only for the Hamiltonian in~\cref{eq:hopping fermions} and its higher dimensional generalizations.
To prepare the ground state of this Hamiltonian we separately consider the even and the odd sublattice and let $a_{1,n} = a_{2n}$ and $a_{2,n} = a_{2n+1}$, and we apply a phase gate, writing $b_{i,n} = (-1)^n a_{i,n}$.
The Hamiltonian in \cref{eq:hopping fermions} is then transformed to
\begin{align}\label{eq:dirac fermion}
  H = -\sum_{n \in \ZZ} b_{1,n}^{\dagger}b_{2,n} - b_{2,n}^{\dagger} b_{1,n+1} + b_{2,n}^{\dagger}b_{1,n} - b_{1,n+1}^{\dagger}b_{2,n}.
\end{align}
This Hamiltonian can be written in Fourier space as
\begin{align*}
  H = \int_{-\pi}^{\pi} \frac{\d k}{2\pi} \begin{pmatrix} b_1(k) \\ b_2(k)\end{pmatrix}^\dagger \begin{pmatrix} 0 & e^{-ik} - 1 \\ e^{ik} - 1 & 0 \end{pmatrix} \begin{pmatrix} b_1(k) \\ b_2(k)\end{pmatrix}
\end{align*}
The ground state $\ket{\psi}$ is now given by filling the negative energy modes (the Fermi sea), which can be found by observing that for each $k$ the matrix
\begin{align*}
  \begin{pmatrix} 0 & e^{-ik} - 1 \\ e^{ik} - 1 & 0 \end{pmatrix}
\end{align*}
has a negative eigenvalue with eigenvector
\begin{align*}
  \frac{1}{\sqrt2}\begin{pmatrix} 1 \\ -\sgn(k)ie^{i\frac{k}{2}} \end{pmatrix}
\end{align*}
which means that the space of negative energy modes consists of all functions~$f = (f_1, f_2)$ in~$\ell^2(\ZZ) \ot \CC^2$ which are such that their Fourier transforms satisfy $f_2(k) = -\sgn(k)ie^{i\frac{k}{2}}f_1(k)$.
That is, if $f$ is such a function, then $(b_1(f_1)^\dagger + b_2(f_2)^\dagger) \ket{\psi} = 0$.
We now consider a pair of orthogonal wavelet filters (note that in contrast to the bosonic case, this is not a pair of biorthogonal filters, but two filters which are each orthogonal) $g_w$ and $h_w$, and we let $W_g$ and $W_h$ denote the corresponding wavelet decomposition maps, which are now orthogonal, i.e. $W_a^{-1} = W_a^{\tran}$ for $a = g,h$.
In particular, this means that we can apply the fermionic second quantization of $W_h$ to the $b_1$ fermions and $W_g$ to the $b_2$ fermions.
If the filters are such that for $-\pi < k < \pi$
\begin{align}\label{eq:hilbert pair}
  g_w(k) = -i\sgn(k)e^{i\frac{k}{2}}h_w(k)
\end{align}
this will allow us to renormalize the ground state.
To see this, consider any mode $f = (f_1, f_2)$ in the Fermi sea, so $f_2(k) = -\sgn(k)ie^{i\frac{k}{2}}f_1(k)$.
When we apply the wavelet transforms $f_i$ is mapped to $(f_{i,w}, f_{i,s})$, a wavelet and a scaling component.
Then one can show that from \cref{eq:hilbert pair} it follows that $f_{1,w} = f_{2,w}$ and $f_{2,s}(k) = -\sgn(k)ie^{i\frac{k}{2}}f_{1,s}(k)$.
We may now apply (the fermionic second quantization of) a Hadamard gate
\begin{align*}
  H = \frac{1}{\sqrt2}\begin{pmatrix} 1 & 1 \\ 1 & -1 \end{pmatrix}
\end{align*}
to the wavelet component, which then disentangles the wavelet component of the negative energy mode.
This shows that if we take the the ground state $\ket{\psi}$ and we first apply the wavelet transforms $W_g$ and $W_h$ followed by $H$ on the wavelet output, we map to $\ket{\psi}$ to itself on the scaling output and the product state $\prod_n b_{1,n}^\dagger \ket{\Omega}$ (that is, the state in which all even sublattice modes $b_{1,n}$ are filled and all odd sublattice modes $b_{2n}$ are not filled) on the wavelet output.
Thus we have implemented a layer of entanglement renormalization.
One can write the fermionic second quantization of an orthogonal wavelet transform with a finite filter as a finite depth fermionic Gaussian circuit \cite{evenbly2018representation}.
We may iteratively apply the renormalization to the scaling component to completely completely disentangle the ground state, and if we consider the circuit in the opposite direction then it maps layers of product states to the ground state of \cref{eq:dirac fermion}.
One may think of this way of preparing the ground state as filling the Fermi sea layer by layer, now choosing a wavelet basis for the Fermi sea instead of the usual Fourier basis.

The relation \cref{eq:hilbert pair} can not be satisfied exactly by a pair of finite filters, but it can be approximated.
In \cite{evenbly2016entanglement} it was shown that a construction using Daubechies D4 filters already gives a good result, and in \cite{haegeman2018rigorous} a general method using known constructions from signal processing applications \cite{selesnick2002design} was suggested, and it was also shown that a good approximation of \cref{eq:hilbert pair} leads to a good approximation of the ground state (which inspired our \cref{thm:approximation}).
The Hamiltonian in \cref{eq:dirac fermion} is in fact the Kogut-Susskind discretization of the Dirac fermion \cite{kogut1975hamiltonian}.
In \cite{witteveen2019quantum} it has been worked out how the wavelet and scaling functions corresponding to the filters $g$ and $h$ have a natural interpretation in the quantum field theory, analogous to the discussion in \cref{sec:continuum}.
In \cref{tab:comparison} we give an overview of the analogies between the fermionic and bosonic case.

\begin{table}
  \begin{footnotesize}
\begin{tabular}{p{2.6cm}|p{5.5cm}|p{5.5cm}}
   & Fermions (at criticality) \cite{evenbly2016entanglement,haegeman2018rigorous} & Bosons \\
  \hline\hline
  Gaussian unitaries & $a_n \mapsto \sum_{m} U_{n,m}a_m$ for $U$ unitary & $q_n \mapsto A_{n,m} q_m$ and $p_n \mapsto A_{n,m} p_m$ with $A^{-1} = B^\tran$ \\
  \hline
  Hamiltonian & $H = -\sum_n a_n^\dagger a_{n+1} + a_{n+1}^\dagger a_n$ & $H = \frac12\bigl(\sum_{n \in \ZZ} p_n^2 + \sum_{n, m \in \ZZ} q_n V_{n - m} q_m\bigr)$ with dispersion relation $\omega(k)$ \\
  \hline
  Wavelet filters & $g$, $h$ orthogonal wavelet filters & $(g,h)$ pair of biorthogonal wavelet filters \\
  \hline
  Filter relation & $g_w(k) = -i\sgn(k)e^{i\frac{k}{2}}h_w(k)$ for~$k \in (-\pi,\pi)$ & $g_w(k) = \omega(k)h_w(k)$ \\
  \hline
  Application of wavelet transform & $b_{1,n} = (-1)^n a_{2n}$, $b_{2,n} = (-1)^n a_{2n+1}$, and apply $W_h$ to the $b_1$ fermions and $W_g$ to the $b_2$ fermions & $A = W_g$, $B = W_h$ \\
  \hline
  Disentangling circuit & Apply wavelet decomposition, then $H$ on wavelet modes. & Apply squeezing to normalize dispersion relation, then apply the wavelet decomposition. \\
  \hline
  Continuum theory & Free Dirac fermion & Free bosonic scalar field (for $\omega(k)=\abs{\sin(\frac{k}{2})}$) \\
  \hline
  Wavelet functions & $\hat{\psi}^g(k) = -i\sgn(k)\hat{\psi}^h(k)$ & $\hat{\psi}^g(k) = \frac{\abs{k}}{4}\hat{\psi}^h(k)$
\end{tabular}
\end{footnotesize}

\caption{Comparison of the wavelet-MERA correspondence for fermions and bosons.}
\label{tab:comparison}
\end{table}
}

\section{Construction of filters}\label{sec:filter construction}
Next we will explain how to construct filter pairs that yield a good approximation of a given dispersion relation.
Suppose we are given a dispersion relation $\omega(k)$.
Let us assume that $\omega(k) = \omega(-k)$.
We would like to construct a biorthogonal filter pair $(g,h)$ such that
\begin{equation}\label{eq:rat approx}
  g_w(k) \approx \omega(k) h_w(k)
\end{equation}
or equivalently
\begin{equation}\label{eq:approximate scaling relation}
  h_s(k) \approx \omega(k + \pi) g_s(k)
\end{equation}
We will describe a general approach to this problem inspired by the Daubechies wavelet construction, similar to the construction of Hilbert pair wavelets due to Selesnick~\cite{selesnick2002design} which were previously used in the construction of fermionic MERAs~\cite{haegeman2018rigorous}.
For this, we start with a rational approximation
\begin{equation*}
  \omega(k + \pi) \approx \frac{a(k)}{b(k)},
\end{equation*}
where $a$ and $b$ are real finite symmetric sequences on $[-L,L]$.
The approximation only has to be accurate around~$k=0$.
We will make the following ansatz for the Fourier transform of the scaling filters
\begin{equation}\begin{aligned}\label{eq:filter ansatz}
  g_s(k) &= b(k)(1 + e^{ik})^K f(k), \\
  h_s(k) &= a(k)(1 + e^{ik})^K f(k)
\end{aligned}\end{equation}
where $f(k)$ is the Fourier transform of a real finite sequence $f[n]$ that still needs to be determined.
The parameter $K$ determines the number of vanishing moments of the biorthogonal wavelets, just as in the Daubechies wavelet construction.
By construction, $g_s(k)$ and $h_s(k)$ are small near $k=\pi$, and \cref{eq:approximate scaling relation} is satisfied.
In order for \cref{eq:filter ansatz} to generate biorthogonal wavelet filters, they need to satisfy the condition in \cref{eq:perfect reconstruction} which translates to
\begin{equation*}
  s(k)f(k)f(-k) + s(k+\pi)f(k + \pi)f(\pi -k) = 2
\end{equation*}
where $s(k) = a(k)b(k)(2\cos(\frac{k}{2}))^{2K}$.
One may try to solve this by letting $r(k) = f(k)f(-k)$. Then $r$ should be taken as a solution to the linear system
\begin{equation*}
  \sum_l s[2n - l]r[l] = \delta_0[n].
\end{equation*}
Now, if possible, we perform a spectral factorization $r(k) = f(k)f(-k)$.
A necessary and sufficient condition for this is that $r(k) \geq 0$ for all $k$.
Unfortunately, we do not know of a condition on $a$ and $b$ that guarantees this.
The resulting filters $(g,h)$ will have support of size $2M$ where $M = K + 2L$.
Finally, in the scale-invariant case, the stability condition that will be required in \cref{thm:approximation} can be checked explicitly for compactly supported filters by looking at the operators $P^g$ and $P^h$ defined by
\begin{equation*}
  (P^a f)(k) = \lvert a(\frac{k}{2}) \rvert^2 f(\frac{k}{2}) + \lvert a(\frac{k}{2} + \pi) \rvert^2 f(\frac{k}{2} + \pi)
\end{equation*}
on the space of polynomials of degree at most $2M$ with zero mean.
The filters yield square integrable scaling functions and uniformly bounded wavelet decomposition maps if and only if the eigenvalues of $P^g$ and $P^h$ are smaller in absolute value than~$2$ (see~\cite{cohen1992stability} or Theorem~4.2 in~\cite{cohen1995wavelets}).

For the massless harmonic chain, one particular choice for $a$ and $b$ is given by
\begin{equation}\begin{aligned}\label{eq:all pass construction}
  a(k) &= \frac12 (e^{-iLk}d(k)^2 + e^{iLk}d(-k)^2),\\
  b(k) &= d(k)d(-k).
\end{aligned}\end{equation}
where $d[n]$ is a maximally flat all-pass filter with delay $\frac14$ of degree $L$~\cite{selesnick2002design}, so it has the property that $e^{-iLk}d(-k)/d(k) \approx e^{-i\frac{k}{2}}$ on $k \in (-\pi,\pi)$.
In \cref{fig:approximation error} we show the goodness of the approximation in \cref{eq:rat approx,eq:approximate scaling relation} as a function of $K$ and $L$.
The resulting filters and wavelets for $K=2$, $L=4$ are shown in \cref{fig:wavelet functions}.
We remark that the construction in \cref{eq:all pass construction} is not necessarily optimal.
\change{From numerical evidence in \cref{fig:approximation error} it appears that the accuracy of the approximation improves exponentially with increasing support~\cite{pyfermions}.
An interesting open problem is to rigorously prove the existence of approximate solutions to \cref{eq:rat approx} with (exponentially) improving approximation accuracy as the filter size increases.}

\begin{figure}
\centering
\begin{overpic}[width=0.85\textwidth,grid=false]{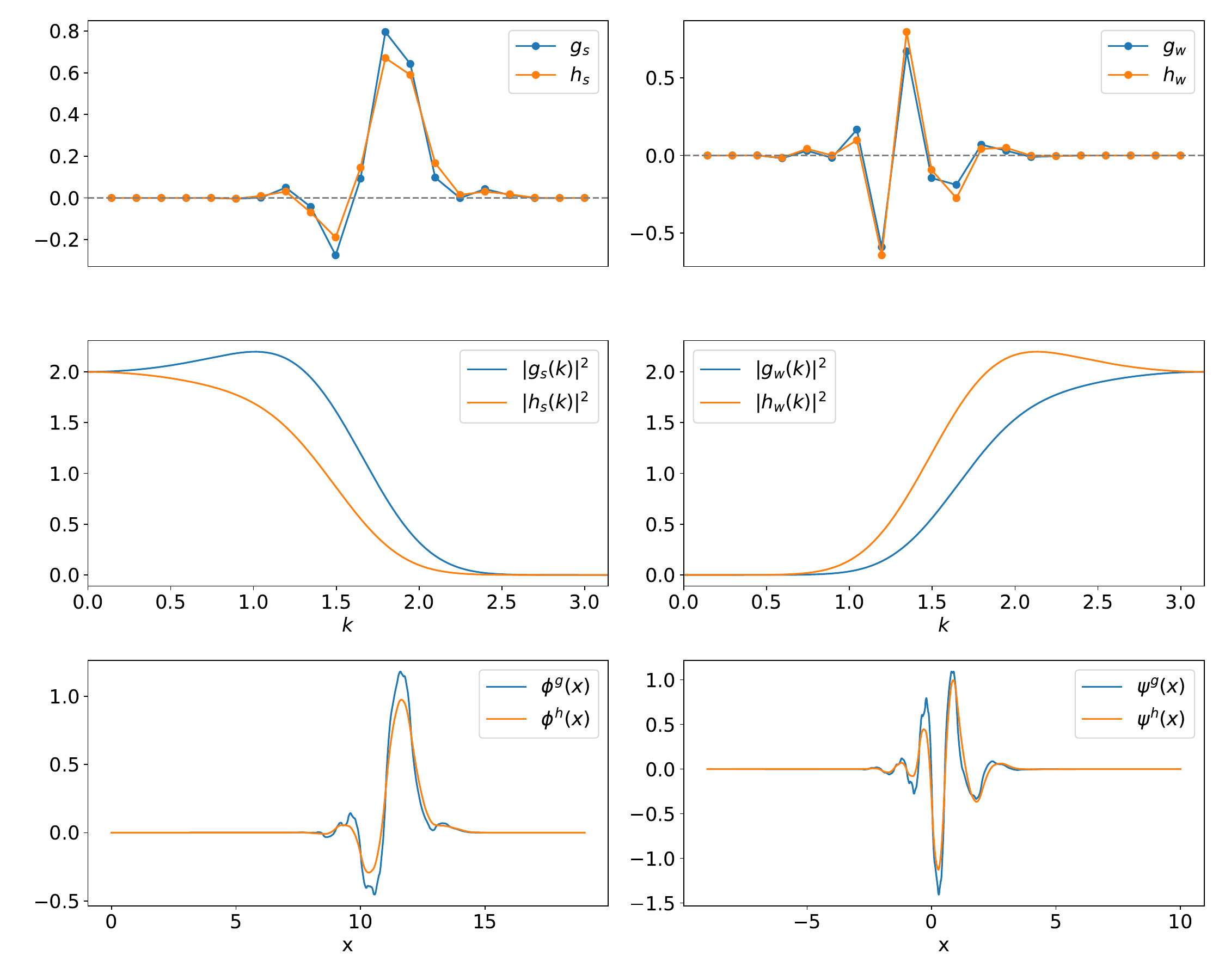}
\put(-5,74){(a)} \put(-5,48){(c)} \put(-5,22){(e)}
\put(103,74){(b)} \put(103,48){(d)} \put(103,22){(f)}
\end{overpic}
\caption{\change{The results of using $K=2$, $L=4$ in the construction of \cref{eq:all pass construction}:
(a)~scaling filters~$g_s$ and~$h_s$,
(b)~wavelet filters~$g_w$ and~$h_w$,
(c)~absolute value squared of the Fourier transforms of the scaling filters~$\abs{g_s(k)}^2$ and~$\abs{h_s(k)}^2$,
(d)~absolute value squared of the Fourier transforms of the wavelet filters~$\abs{g_w(k)}^2$ and~$\abs{h_w(k)}^2$,
(e)~scaling functions~$\phi^g$ and~$\phi^h$, and
(f)~wavelet functions~$\psi^g$ and~$\psi^h$.}}
\label{fig:wavelet functions}
\end{figure}

\begin{figure}
\centering
\includegraphics[width=0.8\textwidth]{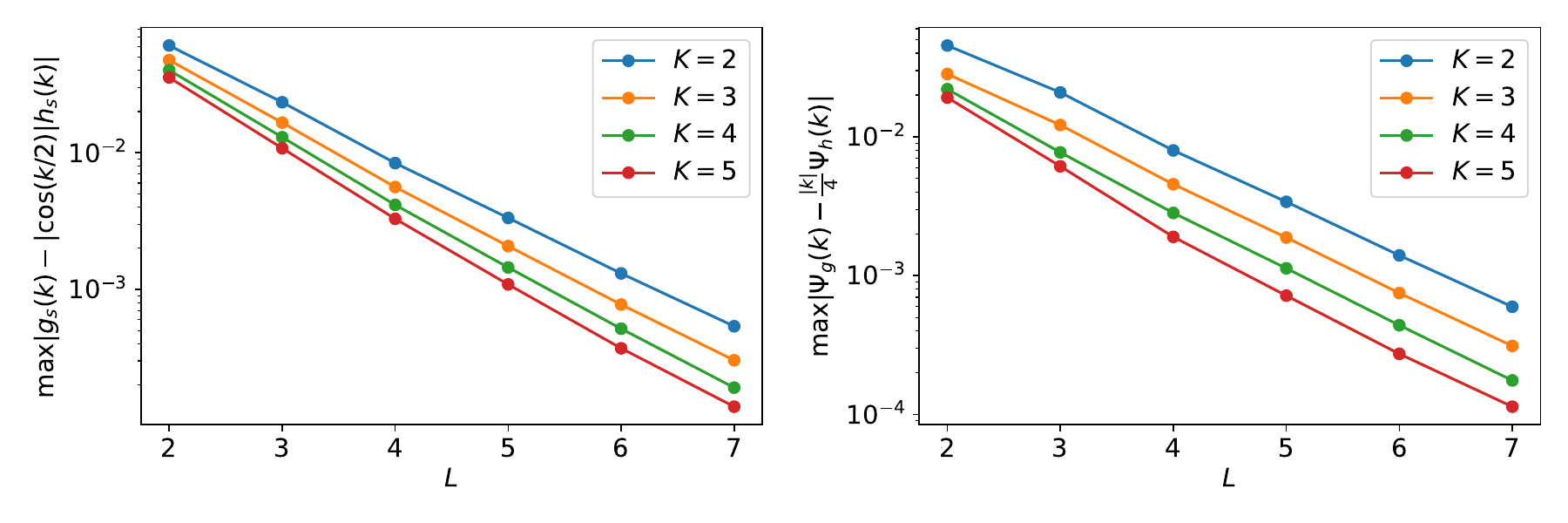}
\caption{Approximation errors for $\eps = \max_k \, \abs{g_w(k) - \abs{\sin(\frac{k}{2})} h_w(k)}$ and $\max_k \, \abs{\psi^g(k) - \frac{\abs{k}}{4} \psi^h(k)}$ for different values of $K$ and $L$ for a filter pair constructed using \cref{eq:all pass construction}. For fixed $K$ the error appears to decrease exponentially in $L$.}
\label{fig:approximation error}
\end{figure}

\section{Construction of circuits from filters}\label{sec:circuit construction}

We now discuss how to explicitly construct a Gaussian circuit from a given pair of biorthogonal wavelet filters, and show that any translation-invariant Gaussian circuit of the form of \cref{fig:circuit decomposition} always arises from such a filter pair.

Motivated by the fermionic setting it has been extensively discussed in~\cite{evenbly2018representation} how to construct \emph{unitary} local circuits from \emph{orthogonal} wavelet filters.
The construction for biorthogonal wavelet filters is very similar and the symmetric case has already been discussed in~\cite{evenbly2018representation}, but for completeness we provide it here.
Given a pair of biorthogonal filters $(g,h)$ of support $2M$ we will construct a binary circuit of depth $M$ that implements the wavelet decomposition map.
We will assume that $g$ and $h$ are supported on $[-M + 1, M]$, which we can always achieve by a shift.
By a binary circuit of depth $M$ we mean a sequence of maps $A_1, \ldots A_M$ on $\ell^2(\ZZ)$ such that
\begin{equation*}
  A_i = \bigoplus_{n = \text{even}} (a_i)_{n,n+1}
\end{equation*}
for $i$ even, and similarly a sum over odd terms if $i$ is odd.
Here $a_i$ is a two by two matrix.
These maps will be such that $A = A_M \circ \ldots \circ A_1$ implements the wavelet reconstruction map in the sense that $A f = W_g^{\tran}(f_{\text{odd}} \op f_{\text{even}})$ and $(A^\tran)^{-1} f = W_h^{\tran}(f_{\text{odd}} \op f_{\text{even}})$ where $f_{\text{even}}[n] = f[2n]$ and $f_{\text{odd}}[n] = f[2n-1]$, as in \cref{fig:circuit decomposition}.
By shift invariance this is equivalent to
\begin{equation}\begin{aligned}\label{eq:classical circuit}
  A \delta_1 &= g_s \\
  A \delta_2 &= g_w \\
  (A^\tran)^{-1} \delta_1 &= h_s \\
  (A^\tran)^{-1} \delta_2 &= h_w
\end{aligned}\end{equation}
as illustrated for $g$ in \cref{fig:circuit construction}.
\begin{figure}
\centering
\quad\raisebox{-2.5cm}{
\begin{overpic}[width=0.3\textwidth,grid=false]{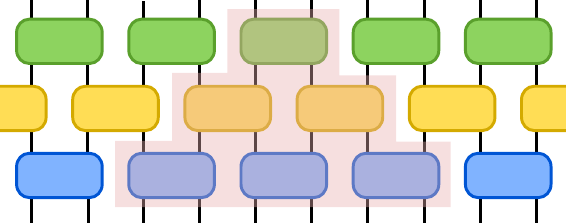}
\put(4,42){$0$} \put(14,42){$0$} \put(24,42){$0$} \put(34,42){$0$} \put(44,42){$\mathbf{1}$} \put(54,42){$0$} \put(64,42){$0$} \put(74,42){$0$} \put(84,42){$0$} \put(94,42){$0$}
\put(48,31){$a_1$} \put(38,19){$a_2$} \put(48,7){$a_3$}
\put(50,-5){$\mathbf{g_s}$}
\end{overpic}}
\quad\raisebox{-2.5cm}{
\begin{overpic}[width=0.3\textwidth,grid=false]{filter_construction}
\put(4,42){$0$} \put(14,42){$0$} \put(24,42){$0$} \put(34,42){$0$} \put(44,42){$0$} \put(54,42){$\mathbf{1}$} \put(64,42){$0$} \put(74,42){$0$} \put(84,42){$0$} \put(94,42){$0$}
\put(48,31){$a_1$} \put(38,19){$a_2$} \put(48,7){$a_3$}
\put(50,-5){$\mathbf{g_w}$}
\end{overpic}}
\vspace{0.2cm}
\caption{Illustration of \cref{eq:classical circuit}, which gives the equations the $a_i$ have to satisfy in order for the circuit to implement $W_g$.}
\label{fig:circuit construction}
\end{figure}
Now $A \op (A^\tran)^{-1}$ is a binary circuit, and its second quantization gives a Gaussian bosonic quantum circuit.
It remains to construct the matrices $a_i$ given the filters $g$ and $h$.
We will need the perfect reconstruction condition \cref{eq:perfect reconstruction} which becomes
\begin{equation*}
  \sum_l g_s[2n + l]h_s[l] = \delta_0[n]
\end{equation*}
upon applying the inverse Fourier transform.
In particular, the vectors $(g_s[-M+1], g_s[-M+2])^\tran$ and $(h_s[M-1], h_s[M])^{\tran}$ are orthogonal, and so are $(g_s[M-1], g_s[M])^\tran$ and $(h_s[-M+1], h_s[-M+2])^{\tran}$.
Furthermore we will use that the wavelet filters are derived from the scaling filters as
\begin{equation}\begin{aligned}\label{eq:wavelet from scaling}
  g_w[n] &= (-1)^{(1-n)}h_s[1 - n] \\
  h_w[n] &= (-1)^{(1-n)}g_s[1 - n]
\end{aligned}\end{equation}
which follows directly from \cref{eq:scaling vs wavelet}.
First suppose that $M = 1$.
In that case we let
\begin{equation*}
  a_1 = \begin{pmatrix} g_s[0] & g_w[0] \\ g_s[1] & g_w[1] \end{pmatrix}.
\end{equation*}
Using the perfect reconstruction condition we may now check that
\begin{equation*}
  (a_1^\tran)^{-1} = \begin{pmatrix} h_s[0] & h_w[0] \\ h_s[1] & h_w[1] \end{pmatrix}
\end{equation*}
so this satisfies \cref{eq:classical circuit}.
For $M > 1$ we will construct the $A_i$ recursively.
Let
\begin{equation*}\begin{aligned}
  g_M = \begin{pmatrix} g_s[M-1] & g_s[-M + 2] \\ g_s[M] & g_s[-M + 1] \end{pmatrix}\\
  a_M = \frac{1}{\sqrt{\det(g_M)}} g_M
\end{aligned}\end{equation*}
then it is clear that $A_M^{-1}$ maps $g_s$ to a sequence $g^{(M-1)}_s$ on $[-M + 2,M-1]$ and $A_M^\tran$ maps $h_s$ to a sequence $h^{(M-1)}_s$ on $[-M + 2,M-1]$ using the orthogonality properties derived from the perfect filter condition
In the non-generic degenerate case that~$\det(g_M) = 0$, the size of the support can only be decreased by~1 and an additional layer is needed.
Moreover, since $A_M$ is invariant under shifts of 2, it is easy to see that $g^{(M-1)}_s$ and $h^{(M-1)}_s$ still satisfy the perfect reconstruction property.
Finally, if we let $\alpha$ denote the map defined by $\alpha x[n] = (-1)^{(1-n)}x[1 - n]$, then in order to see that $A_M^{-1}$ maps $g_w$ to the wavelet filter $g^{(M-1)}_w$ defined by $\alpha h^{(M-1)}_s$, it suffices to check that $ A_M^{-1} \alpha = \alpha A_M^\tran$ or equivalently $ A_M^{-1} = \alpha A_M^\tran \alpha$.
This follows from the inversion formula for two by two matrices with determinant~1, i.e.,
\begin{equation*}
  \begin{pmatrix} a & b \\ c & d \end{pmatrix}^{-1} = \begin{pmatrix} d & -b \\ -c & a \end{pmatrix}.
\end{equation*}
Now we can recursively apply the same procedure to $(g^{(M-1)}, h^{(M-1)})$ to construct $A_{M-1}, \ldots, A_1$.
We have now seen that we can construct a circuit from a filter pair.

Conversely, given a circuit of the form $A = A_M \circ \ldots \circ A_1$ as described above, \emph{define} filters $g$ and $h$ by \cref{eq:classical circuit}.
We can then check that these filters form perfect reconstruction filters, in the sense that $W_h^\tran = W_g^{-1}$.
If we assume $\det(a_i) = 1$ for $i = 1,\ldots,M$, the wavelet and scaling filters are related as in \cref{eq:wavelet from scaling}.

\section{Approximation theorem}\label{sec:approximation}

In this appendix we state and prove a general approximation result for translation-invariant quadratic Hamiltonians of the form \cref{eq:hamiltonian}.
We obtain the theorem in the main text by specializing to the harmonic chain (as explained at the very end of this appendix).
Our proof strategy is inspired by the techniques in~\cite{haegeman2018rigorous}, with the technical complications that the wavelet transforms are not unitary and are allowed to vary layer by layer.

If $g_s,h_s\in\ell^2(\ZZ)$ are a pair of scaling filters that satisfy the perfect reconstruction condition in \cref{eq:perfect reconstruction} of the main text, then we can define corresponding wavelet filters $g_w,h_w\in\ell^2(\ZZ)$ and single-layer decomposition maps~$W_g, W_h \colon \ell^2(\ZZ) \to \ell^2(\ZZ) \op \ell^2(\ZZ)$ such that $W_h^T W_g = W_g^T W_h = \id$.

Now suppose that we are given a sequence of filters $g_s^{(l)}, h_s^{(l)}$ as above.
Here, $l\in\NN$ for convenience of notation.
In practice, one is usually interested in a finite number of layers; in this case we may choose the sequence of filters to eventually become constant.
For $a = g, h$ and $\cL\in\NN$, we define the $\cL$-layer decomposition maps
\begin{equation*}\begin{aligned}
  &W_a^{(\cL)} \colon \ell^2(\ZZ) \rightarrow \ell^2(\ZZ)^{\ot (1+\cL)}, \\
  &W_a^{(\cL)} := \bigl( W_{a^{(\cL-1)}} \op \id^{\op (\cL - 1)} \bigr) \circ \ldots  \circ \bigl( W_{a^{(1)}} \op \id \bigr) \circ W_{a^{(0)}},
\end{aligned}\end{equation*}
and write $W^{(\cL)} = W_h^{(\cL)} \op W_g^{(\cL)}$.
We assume that the family is \emph{stable} in the sense that the corresponding (generalized) scaling functions~$\phi^a_l$ defined in \cref{eq:adaptive scaling functions} exist, are square integrable, and bounded in $L^\infty$-norm.
We can also define the wavelet decomposition maps starting at layer~$\cL' \geq 0$, that is,
\begin{equation*}\begin{aligned}
  &W_a^{(\cL',\cL)} \colon \ell^2(\ZZ) \rightarrow \ell^2(\ZZ)^{\ot (1+\cL-\cL')}, \\
  &W_a^{(\cL', \cL)} := \bigl( W_{a^{(\cL)}} \op \id^{\op (\cL-\cL'-1)} \bigr) \circ \ldots  \circ \bigl( W_{a^{(\cL' + 1)}} \op \id \bigr) \circ W_{a^{(\cL')}}.
\end{aligned}\end{equation*}
For $\cL'=0$ we recover $W_a^{(\cL)}$ as defined earlier.
We assume that the wavelet decomposition maps are bounded.
Finally, we shall assume that the filters have finite support.
Then the same is true for the scaling functions.
In the case that the filters are independent of $l$, the above notion of stability is equivalent to the familiar notion from wavelet theory.
For finitely supported filters there exists an easy criterion to determine this, see~\cite{cohen1992biorthogonal}.

For the entanglement renormalization circuit, we also insert a squeezing operation between each wavelet decomposition layer, defining $R_{a^{(l)}}$, $R_a^{(\cL)}$ and $R^{\cL}$ for $a = g, h$ as in \cref{eq:filter plus squeezing}.
Our approximation to the covariance matrix is then given by
\begin{equation}\begin{aligned}\label{eq:mera covariance matrices}
  \gamma^{q,(\cL)}_{\mera} &= \frac12 R_h^{(\cL),\tran}R_h^{(\cL)}, \\
  \gamma^{p,(\cL)}_{\mera} &= \frac12 R_g^{(\cL),\tran}R_g^{(\cL)}.
\end{aligned}\end{equation}
Suppose the filter pairs $g^{(l)}$, $h^{(l)}$ approximately satisfy the renormalized dispersion relation at each level as in \cref{eq:approximate phase main}.
That is,
\begin{equation}\label{eq:approximate phase supp}
  \abs{g^{(l)}_w(k) - \frac {\omega^{(l)}(k)}{\omega^{(l)}(\pi)} h^{(l)}_w(k) } =
\abs{ g^{(l)}_w(k) - \tilde{g}^{(l)}_w(k) }
\leq \eps,
\end{equation}
where we have introduced the filter
\begin{equation}\label{eq:filter tilde}
  \tilde{g}^{(l)}_w(k) := \frac{\omega^{(l)}(k)}{\omega^{(l)}(\pi)} h^{(l)}_w(k).
\end{equation}
This filter, together with $\tilde{h}^{(l)}_w(k) := \omega^{(l)}(\pi) / \omega^{(l)}(k) \times g^{(l)}_w(k)$, forms a pair of biorthogonal wavelet filters, with corresponding scaling filters $\tilde g^{(l)}_s, \tilde h^{(l)}_s$ that satisfy \cref{eq:perfect reconstruction}.
However, these filters are almost never finitely supported.
By construction, $\tilde{g}^{(l)}, h^{(l)}$ satisfy \cref{eq:wavelet filter phase relation} exactly.

We now state our approximation theorem for general dispersion relations.
We measure the approximation error in terms of quantities
\begin{equation}\begin{aligned}\label{eq:define delta}
  \Delta^p_{nm} := \abs{\gamma^p_{nm} - (\gamma^{p,(\cL)}_{\mera})_{nm}}, \\
  \Delta^q_{nm} := \abs{\gamma^q_{nm} - (\gamma^{q,(\cL)}_{\mera})_{nm}}.
\end{aligned}\end{equation}
If $\omega(0) = 0$, then it is also interesting to regulate the covariance matrix $\gamma^q$ as
\begin{align}\label{eq:regulated cov}
  \tilde{\gamma}^q_{nm} := \gamma^q_{nm} - \gamma^q_{nn}
\end{align}
and consider
\begin{equation}\label{eq:define delta tilde}
  \tilde\Delta^q_{nm} := \abs{\tilde{\gamma}^q_{nm} - (\tilde{\gamma}^{q,(\cL)}_{\mera})_{nm}}.
\end{equation}

\begin{thm}\label{thm:approximation}
Consider a translation-invariant Hamiltonian of the form of \cref{eq:hamiltonian}, with dispersion relation~$\omega(k)$ such that $\omega^{(l)}(\pi) \leq 1$ and $\omega^{(l)}(k) \leq \Omega$ for~$l=1,\dots,\cL$ for some $\Omega \geq 1$.
Suppose we have a sequence of filters such that \cref{eq:approximate phase supp} holds for $\eps \leq 1$, with finite support of size at most~$M$ and scaling functions that are uniformly bounded by~$\norm{\phi^a_l}_\infty \leq B$ for $a=g,h$ and $l=1,\dots,\cL$.
Assume moreover that the wavelet decomposition maps are uniformly bounded by $\norm{W^{(l',l)}_a} \leq \wnorm$ for all $a = g,h,\tilde{g}$ and $1\leq l\leq l'\leq\cL$, where $D\geq1$.
Then the approximation error of the covariance matrices can be bounded as follows:
\begin{equation*}\begin{aligned}
\Delta^p_{nm} &\leq \wnorm^2\Bigl(C2^{-\frac{\cL}2} + 3\eps \wnorm  \log_2 \frac{C}{\eps}\Bigr), \\
\Delta^q_{nm} &\leq 2\wnorm^2\Bigl(C2^{-\frac{\cL}2} + 3\eps \wnorm \log_2 \frac{C}{\eps}\Bigr) \norm{\gamma^q \delta_0}, \\
\tilde\Delta^q_{nm} &\leq 2\wnorm^2\Bigl(C2^{-\frac{\cL}2} + 3\eps \wnorm \log_2 \frac{C}{\eps}\Bigr) \norm{\gamma^q(\delta_n - \delta_m)},
\end{aligned}\end{equation*}
where $C := 4B^2 M^{\frac32}\Omega$.
\end{thm}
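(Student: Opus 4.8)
\emph{Proof plan.} The plan is to compare the actual $\cL$-layer circuit $R^{(\cL)}$ against an \emph{idealized} circuit built from the filters $\tilde g^{(l)}$ of \cref{eq:filter tilde}, which satisfy the phase relation \cref{eq:wavelet filter phase relation} exactly at every layer, and to control the difference one layer at a time. Repeating the computation of \cref{sec:derivation filter condition} for the idealized filters, the idealized reconstruction maps disentangle the ground state exactly: schematically, $\gamma^p$ (resp.\ $\gamma^q$) equals the idealized $\cL$-layer reconstruction map applied to a product state on the $\cL$ high-frequency levels together with the renormalized state of dispersion $\omega^{(\cL)}$, i.e.\ covariance $\tfrac12\omega^{(\cL)}$ (resp.\ $\tfrac1{2\omega^{(\cL)}}$), on the last level. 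Subtracting the MERA covariance matrices of \cref{eq:mera covariance matrices}, the error splits into a \emph{truncation term} $T$ --- from replacing that renormalized last-level block by the product state $\tfrac12\id$, so $T = \tfrac12 R_{\tilde g}^{(\cL),\tran}\bigl(0^{\oplus\cL}\oplus(\omega^{(\cL)}-1)\bigr)R_{\tilde g}^{(\cL)}$ in the $p$-sector --- plus a \emph{filter term} $F = \tfrac12\bigl(R_{\tilde g}^{(\cL),\tran}R_{\tilde g}^{(\cL)}-R_{g}^{(\cL),\tran}R_{g}^{(\cL)}\bigr)$, which I would further decompose as $F=\sum_{\ell=0}^{\cL-1}F_\ell$ by swapping the idealized layer $R_{\tilde g^{(\ell)}}$ for the actual one $R_{g^{(\ell)}}$ one index at a time.

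For the truncation term I would bound $|T^p_{nm}| \leq \tfrac12\norm{\omega^{(\cL)}-1}_\infty\,\norm{(R_{\tilde g}^{(\cL)}\delta_n)_{\mathrm{last}}}\,\norm{(R_{\tilde g}^{(\cL)}\delta_m)_{\mathrm{last}}}$ and use $\norm{\omega^{(\cL)}-1}_\infty\leq\Omega$. The key estimate is that the coarsest-level component $(R^{(\cL)}\delta_n)_{\mathrm{last}}$ consists of the values near $n$ of the scale-$\cL$ (generalized) scaling function, hence is supported on $\mathcal O(M)$ coarse sites with entries of size $\leq B\,2^{-\cL/2}$, so that $\norm{(R^{(\cL)}\delta_n)_{\mathrm{last}}}\leq B\sqrt M\,2^{-\cL/2}$; this already yields $|T^p_{nm}|\lesssim\Omega B^2 M\,2^{-\cL}$, comfortably inside the claimed $\wnorm^2 C\,2^{-\cL/2}$. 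For the $q$-sector the multiplier $\tfrac1{\omega^{(\cL)}}-1$ need not lie in $L^\infty$ (the IR issue when $\omega(0)=0$), so instead I would use that $\tfrac1{\omega^{(\cL)}}$ applied to the last-level component equals $2$ times the last-level component of the idealized decomposition of $\gamma^q\delta_m$ --- because $\tfrac1{2\omega^{(\cL)}}$ is precisely the disentangled $q$-covariance --- and bound its norm by $\wnorm\norm{\gamma^q\delta_m}=\wnorm\norm{\gamma^q\delta_0}$ using stability and translation invariance. This produces the factor $2^{-\cL/2}$ from the other component and the data-dependent factor $\norm{\gamma^q\delta_0}$; the regularized quantity $\tilde\delta^q_{nm}$ is handled the same way with $\delta_n-\delta_m$ in place of $\delta_0$, giving $\norm{\gamma^q(\delta_n-\delta_m)}$.

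Each layer-$\ell$ filter term $F_\ell$ I would estimate in two complementary ways. First, $R_{g^{(\ell)}}-R_{\tilde g^{(\ell)}}$ differs only through $g^{(\ell)}_w$ versus $\tilde g^{(\ell)}_w$ (the perfect-reconstruction condition fixing the scaling parts), which by \cref{eq:approximate phase supp} has $\ell^\infty$-size $\leq\eps$; sandwiching it between the remaining decomposition maps, each of norm $\leq\wnorm$ by the stability hypothesis, gives $|(F_\ell)_{nm}|\lesssim\eps\wnorm^3$. Second, irrespective of $\eps$, the perturbation $F_\ell$ sits on the scale-$\ell$-and-coarser part of the circuit, so a \emph{fixed} matrix entry couples only to scale-$\ell$ scaling and wavelet functions localized near $n$ and $m$, of size $\lesssim 2^{-\ell/2}$, giving $|(F_\ell)_{nm}|\lesssim \wnorm\,C\,2^{-\ell/2}$. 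Hence $|(F_\ell)_{nm}|\lesssim \wnorm^2\min(\eps\wnorm,\,C\,2^{-\ell/2})$. Summing, $\delta^p\leq\max_{nm}|T^p_{nm}|+\sum_{\ell=0}^{\cL-1}\max_{nm}|(F_\ell)_{nm}|$, and the geometric-type sum $\sum_{\ell\geq0}\min(\eps\wnorm,C\,2^{-\ell/2})$, split at the crossover $\ell_0\approx 2\log_2(C/\eps)$, contributes $\mathcal O(\eps\wnorm\log_2\tfrac C\eps)$ from the head $\ell<\ell_0$ and $\mathcal O(\eps\wnorm)$ from the geometric tail $\ell\geq\ell_0$; together with the truncation term this gives the stated bound $\wnorm^2\bigl(C\,2^{-\cL/2}+3\eps\wnorm\log_2\tfrac C\eps\bigr)$, and the $q$-bounds differ only by the extra factors $2\norm{\gamma^q\delta_0}$, respectively $2\norm{\gamma^q(\delta_n-\delta_m)}$, from the truncation estimate.

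The step I expect to be the main obstacle is making the per-layer estimate genuinely decay like $2^{-\ell/2}$. In the fermionic analysis of~\cite{haegeman2018rigorous} the wavelet transforms are unitary and distinct scales are exactly orthogonal, so the deep layers are automatically negligible; here the decomposition maps $W^{(l',l)}_a$ are only boundedly invertible and, worse, vary from layer to layer, so the decoupling of scales must instead be extracted from the uniform $L^\infty$-bound and finite support of the generalized scaling functions $\phi^a_l$ and then carefully combined with the propagation of the $\eps$-error through non-isometric maps --- this is exactly where the constants $B$, $M$ and $\wnorm$ enter the final bound. A secondary difficulty is the infrared regularization of the $q$-sector: the bound cannot be stated with an absolute constant, and the rewriting that replaces the divergent multiplier $\tfrac1{\omega^{(\cL)}}$ by $\gamma^q$ itself has to be applied consistently at every stage of the telescope, not just in the truncation term.
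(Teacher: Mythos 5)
Your plan is, in outline, the paper's own proof: introduce the exactly-disentangling filters $\tilde g^{(l)}$ of \cref{eq:filter tilde}, split the error into a truncation piece controlled by the $2^{-\ell/2}$ smallness of the coarse scaling components of $\delta_n$ (this is precisely \cref{lem:IR bound}), and a filter piece controlled by telescoping $\norm{W_{g^{(l)}}-W_{\tilde g^{(l)}}}\leq 2\eps$ through the layers. The paper merely packages the balancing differently: it cuts the exact decomposition at a single intermediate layer $\cL'=\min\{\lfloor2\log_2\frac C\eps\rfloor,\cL\}$ and uses the uniform bound $\norm{W_g^{(\cL')}-W_{\tilde g}^{(\cL')}}\leq2\eps\cL'\wnorm^2$ instead of your per-layer $\min(\eps\wnorm,C2^{-\ell/2})$; the arithmetic is the same. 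For the $p$-sector your argument goes through essentially as written, with two corrections. The exact identity is one-sided, $R_h^{(\cL)}\gamma^p=(\gamma^{p,(\cL)}\op\frac12\id)R_{\tilde g}^{(\cL)}$ and hence $\gamma^p=R_g^{(\cL),\tran}(\gamma^{p,(\cL)}\op\frac12\id)R_{\tilde g}^{(\cL)}$, not the symmetric $R_{\tilde g}^{(\cL),\tran}(\cdot)R_{\tilde g}^{(\cL)}$ you write; and since the hypotheses bound scaling functions only for $a=g,h$ (for $\tilde g$ only the operator norm $\wnorm$ is assumed, and $\tilde g$ is generically not finitely supported), the truncation term gets only one factor of $2^{-\cL/2}$ rather than your $2^{-\cL}$ --- still well within the claimed bound.

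The genuine gap is the $q$-sector, which you defer to ``the same rewriting applied consistently''. The problem is not only that $1/\omega^{(\cL)}$ is unbounded: the exact $q$-identity reads $R_{\tilde g}^{(\cL)}\gamma^q=(\gamma^{q,(\cL)}\op\frac12\id)R_h^{(\cL)}$, so solving for $\gamma^q$ brings in $R_{\tilde g}^{(\cL),-1}=R_{\tilde h}^{(\cL),\tran}$, i.e.\ the biorthogonal duals $\tilde h^{(l)}$. These are \emph{not} controlled by the hypotheses, and worse, $\tilde h^{(l)}_w(k)-h^{(l)}_w(k)=\frac{\omega^{(l)}(\pi)}{\omega^{(l)}(k)}\bigl(g^{(l)}_w(k)-\tilde g^{(l)}_w(k)\bigr)$ is only bounded by $\eps\,\omega^{(l)}(\pi)/\omega^{(l)}(k)$, which diverges at $k=0$ in the gapless case --- so the telescoping estimate your filter bound relies on fails for the dual family. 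The paper sidesteps this with the identity $\gamma^q-\gamma^{q,(\cL)}_{\mera}=(\id-4\gamma^{q,(\cL)}_{\mera}\gamma^p)\gamma^q=\bigl(W_h^{(\cL'),\tran}W_g^{(\cL')}-W_h^{(\cL),\tran}(2\gamma^{p,(\cL)}\op\id)W_{\tilde g}^{(\cL)}\bigr)\gamma^q$, which reduces everything to the well-controlled $p$-sector relation (only $\tilde g$ appears) applied to the delocalized vector $\gamma^qf$; as a consequence $\norm{\gamma^qf}$ multiplies the \emph{whole} bound, including the $\eps\log\frac1\eps$ term, not just the truncation piece as you suggest. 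This reduction also has a price for your per-layer scheme: since $\gamma^qf$ is not localized, \cref{lem:IR bound} no longer supplies the $2^{-\ell/2}$ decay on the side the layer-$\ell$ perturbation acts on, so your ``second bound'' $\abs{(F_\ell)_{nm}}\lesssim C2^{-\ell/2}$ is not justified as stated for the $q$-sector; one must either extract the decay from the $\delta_n$ side (collapsing the layers above $\ell$ via $W_h^{(\ell+1,\cL),\tran}W_g^{(\ell+1,\cL)}=\id$ and strengthening the lemma to cover all levels $\geq\ell$), or, as the paper does, forgo the per-layer decay and accept the $\eps\cL'$ accumulation. You need to supply this identity, or an equivalent mechanism, before the bounds on $\delta^q$ and $\tilde\delta^q_{nm}$ are actually proved.
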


\noindent
To interpret the error bounds, we note that
\begin{align}
\label{eq:first norm interp}
  \norm{\gamma^q \delta_0}^2 &= \int_{-\pi}^{\pi}\frac{\d k}{\omega(k)^2}, \\
\label{eq:second norm interp}
  \norm{\gamma^q(\delta_n - \delta_m)}^2
&= \int_{-\pi}^{\pi} \d k \, \frac{\sin^2(\frac12(n-m)k)}{\omega(k)^2}.
\end{align}
As mentioned earlier, our proof strategy follows~\cite{haegeman2018rigorous} with two technical complications:
the wavelet transforms are not unitary and are allowed to vary layer by layer.

We will first bound the error that arises from only taking a finite number of layers.
Let~$p^{(\cL)}_s$ denote the projection onto the first tensor factor of $\ell^2(\ZZ)^{\ot(\cL + 1)}$ and $p^{(\cL)}_w = \id - p^{(\cL)}_s$ the projection onto the remaining tensor factors.
Thus, $p^{(\cL)}_s W_a^{(\cL)} f$ is the scaling component of the decomposed signal and $p^{(\cL)}_w W_a^{(\cL)}$ its wavelet component.
The following lemma confirms the intuition that, for finitely supported signals, lower-frequency wavelet modes contribute less.

\begin{lem}\label{lem:IR bound}
Suppose we have sequence of filters as above, with finite support of size at most~$M$ and scaling functions that are uniformly bounded by $\norm{\phi^a_l}_\infty \leq B$ for $a=g,h$ and $l=1,\dots,\cL$.
Then,
\begin{equation}\label{eq:IR bound}
  \lVert p^{(\cL)}_s W_a^{(\cL)} \delta_n \rVert \leq 2^{-\frac{\cL - 1}{2}} B^2 M^{\frac32}
\end{equation}
where $\delta_n$ is the unit signal concentrated at $n$.
\end{lem}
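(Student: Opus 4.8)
The plan is to turn the intuition ``$\cL$ rounds of low-pass filtering annihilate a unit impulse'' into a quantitative estimate by identifying the scaling component $p^{(\cL)}_s W_a^{(\cL)}\delta_n$ with a finely sampled copy of a (generalized) scaling function.

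\textbf{Reduction to an iterated scaling sequence.} Each layer acts on its low-pass slot by $(W_{a^{(l)}}f)^{\mathrm{low}}[m] = \sum_j a_s^{(l)}[j-2m]f[j]$, so composing the low-pass parts of all $\cL$ layers gives $(p^{(\cL)}_s W_a^{(\cL)} f)[m] = \sum_j \Phi^{(\cL)}_a[j - 2^{\cL} m] f[j]$, where $\Phi^{(\cL)}_a\in\ell^2(\ZZ)$ is the iterated low-pass sequence characterised by $\widehat{\Phi^{(\cL)}_a}(k) = \prod_{l=0}^{\cL-1} a^{(l)}_s(2^{l} k)$ (equivalently $\Phi^{(\cL)}_a = W_a^{(\cL),\tran}$ applied to a unit impulse in the scaling slot). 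Since the filters are real, applying this to $\delta_n$ yields
\begin{equation*}
  \bigl\lVert p^{(\cL)}_s W_a^{(\cL)} \delta_n \bigr\rVert^2 = \sum_m \bigl\lvert \Phi^{(\cL)}_a[n - 2^{\cL} m]\bigr\rvert^2 ,
\end{equation*}
a $2^{\cL}$-fold decimation of $\lvert\Phi^{(\cL)}_a\rvert^2$ (in particular the bound is translation-invariant in $n$, as it must be).

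\textbf{Two ingredients.} First, a support count: each $a^{(l)}_s$ is supported in an interval of at most $M$ consecutive integers, so its $2^{l}$-fold upsampling $a^{(l)}_s(2^{l}\cdot)$ occupies a window of length $\le 2^{l}(M-1)$, whence $\Phi^{(\cL)}_a$ is supported in a window of length $\le \sum_{l=0}^{\cL-1}2^{l}(M-1) < M 2^{\cL}$. The arithmetic progression $\{n - 2^{\cL} m\}_m$ therefore meets this support in at most $M$ points, so $\lVert p^{(\cL)}_s W_a^{(\cL)} \delta_n\rVert^2 \le M \lVert \Phi^{(\cL)}_a\rVert_\infty^2$. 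Second, the amplitude bound $\lVert\Phi^{(\cL)}_a\rVert_\infty \le \sqrt2\,2^{-\cL/2}B^2 M$, which is where the stability hypothesis enters: the piecewise-constant function $\tilde\phi_\cL := 2^{\cL/2}\sum_n \Phi^{(\cL)}_a[n]\,\one_{[2^{-\cL}n,\,2^{-\cL}(n+1))}$ is a resolution-$2^{-\cL}$ approximation of a generalized scaling function of the family, because $2^{-\cL/2}\widehat{\Phi^{(\cL)}_a}(2^{-\cL}\xi) = \prod_{j=1}^{\cL} a_s^{(\cL-j)}(2^{-j}\xi)/\sqrt2$ is a truncation of the infinite product defining $\hat\phi^a$ in \cref{eq:adaptive scaling functions} (the factors of small index involve the near-constant deep-layer filters and rebuild $\phi^a$, those of large index have argument near $0$ and are close to $1$). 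Hence $\lVert\Phi^{(\cL)}_a\rVert_\infty = 2^{-\cL/2}\lVert\tilde\phi_\cL\rVert_\infty$ is of order $2^{-\cL/2}\lVert\phi^a\rVert_\infty \le 2^{-\cL/2}B$ up to the approximation error; making this effective at finite $\cL$ — and re-using the finite-support bookkeeping above, which contributes the extra factor $M$ and, via a Cauchy--Schwarz pairing against the dual scaling function, a second factor $B$ — yields the claimed constant. Substituting gives $\lVert p^{(\cL)}_s W_a^{(\cL)}\delta_n\rVert^2 \le 2\cdot 2^{-\cL}B^4 M^3$, i.e.\ \cref{eq:IR bound}.

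\textbf{Main obstacle.} The reduction and the support count are routine; the crux is the amplitude bound. One must pass, with explicit constants, from the qualitative statement ``the continuum scaling functions lie in $L^\infty$ with norm $\le B$'' to a uniform $\bigo(2^{-\cL/2})$ bound on the discrete cascade sequence at every level. The naive recursion $\lVert\Phi^{(\cL)}_a\rVert_\infty \le \mathrm{const}\cdot\lVert\Phi^{(\cL-1)}_a\rVert_\infty$ is useless here because it discards the decisive $2^{-\cL/2}$ decay, which reflects genuine Fourier cancellation rather than crude size estimates; this is precisely why the stability (square-integrability / $L^\infty$-boundedness of the scaling functions) hypothesis is indispensable, and the cost of rendering the discrete-to-continuum comparison quantitative is the non-tight polynomial factor $B^2 M^{3/2}$ in the statement.
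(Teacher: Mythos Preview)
Your decomposition into a support count plus an amplitude bound on the iterated low-pass sequence $\Phi^{(\cL)}_a$ has the right skeleton, and the support count is fine. The gap is the amplitude bound $\lVert\Phi^{(\cL)}_a\rVert_\infty \le \sqrt{2}\,2^{-\cL/2} B^2 M$: you assert it rather than prove it. Your route --- the piecewise-constant $\tilde\phi_\cL$ converges to a scaling function, so $\lVert\tilde\phi_\cL\rVert_\infty \approx \lVert\phi^a\rVert_\infty$ --- is only qualitative, and (as you concede in your ``Main obstacle'' paragraph) turning it into a uniform bound at each finite $\cL$ is precisely the step you do not carry out. There is an additional snag in the non-stationary setting: your product $\prod_{j=1}^{\cL} a_s^{(\cL-j)}(2^{-j}\xi)/\sqrt{2}$ runs through filters $a^{(\cL-1)},\dots,a^{(0)}$, whereas each $\hat\phi^a_l$ in \cref{eq:adaptive scaling functions} uses $a^{(l+1)},a^{(l+2)},\dots$; these index sets do not line up, so $\tilde\phi_\cL$ is not the truncation of any single $\hat\phi^a_l$, and the cascade-convergence picture does not apply directly.

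The paper bypasses this entirely via an exact identity rather than an approximation. Using biorthogonality $\langle\phi^b_{0,n},\phi^a_{0,m}\rangle=\delta_{nm}$ (with $b$ the filter dual to $a$) and the two-scale relation inductively, one obtains
\[
  \bigl(p^{(\cL)}_s W_a^{(\cL)}\delta_n\bigr)[m] \;=\; \langle\phi^b_{0,n},\,\phi^a_{\cL,m}\rangle,
\]
so in your notation $\Phi^{(\cL)}_a[n-2^{\cL}m]=\langle\phi^b_{0,n},\phi^a_{\cL,m}\rangle$. Now the hypothesis $\lVert\phi^a_\cL\rVert_\infty\le B$ applies directly: write out the integral over the support of $\phi^b_0$ (an interval of length at most $M$), apply Cauchy--Schwarz, bound $\lvert\phi^a_\cL\rvert$ pointwise by $B$, and count that at most $2M$ values of $m$ give a nonzero inner product. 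This yields $\lVert p^{(\cL)}_s W_a^{(\cL)}\delta_n\rVert^2 \le 2^{-\cL+1} M^2 \lVert\phi^b_0\rVert_2^2 \lVert\phi^a_\cL\rVert_\infty^2 \le 2^{-\cL+1}B^4 M^3$. Your phrase ``Cauchy--Schwarz pairing against the dual scaling function'' is pointing at exactly this identity; once you write it down, no convergence argument is needed and the non-stationary indexing issue disappears.
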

\begin{proof}
Let $b$ denote the filters dual to $a$ (i.e., $b=h$ if $a=g$, and vice versa).
We note that
$p^{(\cL)}_s W_a^{(\cL)}\delta_n[m] = \braket{\phi^b_{0,n}, \phi^a_{\cL,m}}$, where $\phi^a_{\cL,m}(x) := 2^{-\cL/2} \phi^a_\cL(2^{-\cL} x - m)$ are the translated and shifted scaling functions.
This follows from the fact that $\braket{\phi^b_{0,n}, \phi^a_{0,m}} = \delta_{nm}$ and by applying inductively the fact that by definition of the scaling functions $\phi^a_{l+1,m} = \sum_n a_s^{(l+1)}(2m - n)\phi^a_{l,n}$.
Now we can proceed as in the proof of Lemma~1 in~\cite{haegeman2018rigorous} and estimate
\begin{equation*}\begin{aligned}
  \lVert p^{(\cL)}_s W_a^{(\cL)} \delta_n \rVert^2
  &=  \sum_m \bigl\lvert \int_{-\infty}^{\infty} \,\d x\, \phi^b_{0}(x - n) 2^{-\frac{\cL}{2}}\phi^a_{\cL}(2^{-\cL}x - m) \bigr\rvert^2  \\
  &=  \sum_m \bigl\lvert \int_{x_0 +n}^{x_0 + n+M - 1} \,\d x\, \phi^b_{0}(x - n) 2^{-\frac{\cL}{2}}\phi^a_{\cL}(2^{-\cL}x - m) \bigr\rvert^2  \\
  &\leq \sum_m \lVert \phi^b_{0} \rVert^2 \int_{x_0 +n}^{x_0 + n+M - 1} \,\d x\, \lvert 2^{-\frac{\cL}{2}}\phi^a_{\cL}(2^{-\cL}x - m)\rvert^2  \\
  &= 2^{-\cL} \lVert \phi^b_{0} \rVert^2 \sum_m \int_{x_0 +n}^{x_0 + n+M - 1} \,\d x\, \lvert \phi^a_{\cL}(2^{-\cL}x - m)\rvert^2  \\
  &\leq 2^{-\cL + 1} M^2 \lVert \phi^b_{0} \rVert^2 \lVert \phi^a_{\cL} \rVert_{\infty}^2,
\end{aligned}\end{equation*}
where in the second line we use that $\phi^b$ is compactly supported on $[x_0, x_0 + M -1]$ for some $x_0$, in the third inequality we use Cauchy-Schwarz, and for the final inquality we use that at most $2M$ terms in the sum have nonzero overlap.
Finally we may estimate $\lVert \phi^b_{0} \rVert^2 \leq M B^2$ and $\lVert \phi^a_{\cL} \rVert_{\infty}^2 \leq B^2$, which yields \cref{eq:IR bound}.
\end{proof}

\noindent
The following lemma bounds the approximation error for $\cL$ layers as a function of an intermediate layer~$\cL'$ that will later be chosen appropriately.

\begin{lem}\label{lem:multiple layer error}
Suppose we have a sequence of filters such that \cref{eq:approximate phase supp} holds, with finite support of size at most~$M$ and scaling functions that are uniformly bounded by $\norm{\phi^a_l}_\infty \leq B$ for $a=g,h$ and $l=1,\dots,\cL$.
Assume moreover that the wavelet decomposition maps are uniformly bounded by $\norm{W^{(l',l)}_a} \leq \wnorm$ for all $a = g,h,\tilde{g}$ and $1\leq l\leq l'\leq\cL$, where $D\geq1$.
Finally, let $\cL' \in \{1, \ldots, \cL\}$.
Then we have the following bounds:
\begin{enumerate}
\item For all $f \in \ell^2(\ZZ)$ and $n\in\NN$,
\begin{equation}\begin{aligned}\label{eq:error q covariance}
  \abs{\braket{\delta_n | \gamma^q - \gamma^{q,(\cL)}_{\mera} | f}}
&\leq 2 \wnorm^2 \left( \eps \cL' \wnorm + 2^{-\frac{\cL' - 1}{2}} B^2 M^{\frac32} \max \, \{2\norm{\gamma^{p,(\cL)}}, 1\} \right) \norm{\gamma^q f}.
\end{aligned}\end{equation}
\item Assuming $\omega^{(l)}(\pi) \leq 1$ for all~$l=0,\dots,\cL-1$, we have the following bound for all $n\in\NN$:
\begin{equation}\label{eq:error p covariance}
  \norm{(\gamma^p - \gamma^{p,(\cL)}_{\mera}) \delta_n}
\leq \wnorm^2 \left( \eps \cL' \wnorm + 2^{-\frac{\cL'-1}2} B^2 M^{\frac32} \max \, \{ 2\norm{\gamma^{p,(\cL')}}, 1 \} \right).
\end{equation}
\end{enumerate}
Here, we recall that $\gamma^q(k) = \frac1{2\omega(k)}$ and $\gamma^{p,(l)}(k) = \frac12 \omega^{(l)}(k)$.
\end{lem}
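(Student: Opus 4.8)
The plan is to represent the exact covariance matrices through renormalization maps built from the phase-exact filters $\tilde g^{(l)},\tilde h^{(l)}$ of \cref{eq:filter tilde}, which satisfy the renormalized relation \cref{eq:wavelet filter phase relation} at every layer, and then to transfer layer by layer to the actual finitely supported filters $g^{(l)},h^{(l)}$. Running the disentangling computation of \cref{sec:derivation filter condition} for a stack of $\cL'\le\cL$ layers gives the exact identities
\begin{equation*}
  \gamma^q = \tilde R_h^{(\cL'),\tran}\bigl(\gamma^{q,(\cL')}\op\tfrac12\id^{\op\cL'}\bigr)\tilde R_h^{(\cL')},\qquad
  \gamma^p = \tilde R_g^{(\cL'),\tran}\bigl(\gamma^{p,(\cL')}\op\tfrac12\id^{\op\cL'}\bigr)\tilde R_g^{(\cL')},
\end{equation*}
with $\gamma^{\bullet,(\cL')}$ the renormalized covariances of \cref{eq:renormalized disperson}, whereas \cref{eq:mera covariance matrices} reads $\gamma^{q,(\cL)}_{\mera}=R_h^{(\cL),\tran}\bigl(\tfrac12\id\bigr)R_h^{(\cL)}$ and likewise for $p$. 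Subtracting, the error splits into (a) the discrepancy between the first $\cL'$ layers built from $\tilde g,\tilde h$ and from $g,h$, and (b) the discrepancy between the coarse-grained block $\gamma^{\bullet,(\cL')}$ and the product block $\tfrac12\id$ — the deeper MERA layers $\cL'{+}1,\dots,\cL$ contributing only through this block.

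For (a) I would telescope: the difference of the two $\cL'$-layer maps is a sum of $\cL'$ terms, the $l$-th of which swaps $\tilde g^{(l)},\tilde h^{(l)}$ for $g^{(l)},h^{(l)}$ at a single layer while holding the rest fixed. Each single-layer deviation is governed by \cref{eq:approximate phase supp}, i.e.\ is of size $\eps$ (the companion deviations of the scaling filters and of the dual filters being of the same order by the perfect reconstruction relation — in the $q$-case any residual infrared growth is absorbed into the factor $\norm{\gamma^q f}$ relative to which the bound is stated). Propagating a single swap through the remaining layers costs the uniform wavelet-map bound $\wnorm$, and the squeezing prefactors $\omega^{(l)}(\pi)^{\pm1/2}$ stay harmless thanks to $\omega^{(l)}(\pi)\le1$; summing the $\cL'$ contributions and dressing with the two outer reconstruction maps yields the $\eps\cL'\wnorm$ term (with the overall $\wnorm^2$).

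For (b) the key algebraic observation is that $\gamma^{q,(l)}-\tfrac12\id=-2\,\gamma^{q,(l)}\bigl(\gamma^{p,(l)}-\tfrac12\id\bigr)$ (and symmetrically with $q\leftrightarrow p$), valid symbolwise since $\tfrac1{2\omega}-\tfrac12=-\tfrac1{2\omega}\cdot2\bigl(\tfrac{\omega}2-\tfrac12\bigr)$. This rewrites the block difference as the product of the (possibly unbounded) operator $\gamma^{q,(l)}$ with the bounded operator $\gamma^{p,(l)}-\tfrac12\id$, of norm at most $\tfrac12\max\{2\norm{\gamma^{p,(l)}},1\}$ since $\omega^{(l)}\le\Omega$; the factor $\gamma^{q,(l)}$ then recombines with one of the outer reconstruction maps via $\bigl(\gamma^{q,(\cL)}\op\tfrac12\id\bigr)R_h^{(\cL)}=R_g^{(\cL)}\gamma^q$ (and $R_g=(R_h^{\tran})^{-1}$) to reconstitute $\gamma^q$ acting on $f$ — hence item~(i) is phrased relative to $\norm{\gamma^q f}$ rather than as an operator-norm bound. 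What remains is the scaling component $p^{(\cL')}_s$ of the relevant composed map applied to $\delta_n$, and \cref{lem:IR bound}, applied after $\cL'$ layers, bounds it by $2^{-(\cL'-1)/2}B^2M^{\frac32}$, the remaining $\cL-\cL'$ layers being absorbed into one more factor of $\wnorm$; this produces the $2^{-(\cL'-1)/2}B^2M^{\frac32}\max\{2\norm{\gamma^{p,\bullet}},1\}$ term. For item~(ii) no regularization is needed: $\gamma^{p,(l)}$ is uniformly bounded by $\tfrac\Omega2$ and the $p$-modes carry only the contracting squeezing $\omega^{(l)}(\pi)^{1/2}\le1$, so one gets a genuine column-norm bound on $\bigl(\gamma^p-\gamma^{p,(\cL)}_{\mera}\bigr)\delta_n$, the assumption $\omega^{(l)}(\pi)\le1$ entering exactly here.

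The main obstacle is the bookkeeping forced by non-unitarity and by the layer-dependence of the filters: unlike the fermionic case of~\cite{haegeman2018rigorous} one cannot transport projections freely across the circuit using unitarity, so every estimate must be threaded through the operator-norm bounds $\wnorm$ and the squeezing factors, and \cref{lem:IR bound} — which requires finite support — can be invoked only for the actual, not the tilde, filters, which is why the circuit must be split at the intermediate layer $\cL'$. One further has to check that the deep-layer swaps (layers $l>\cL'$, where the tilde and actual filters still differ) are negligible, being suppressed by $2^{-l/2}$; and the shallow-layer swap errors must be shown to \emph{add} rather than to \emph{compound}, so that the $\eps$-term is $\bigo(\eps\cL')$ and not $\bigo((1+\eps)^{\cL'})$ — this is where the telescoping must be carried out with care.
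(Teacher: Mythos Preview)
Your proposal is essentially correct and uses the same ingredients as the paper --- telescoping over $\cL'$ layers for the $\eps$-error, \cref{lem:IR bound} for the truncation at layer~$\cL'$, and the intertwining relation between the renormalized covariances and the phase-exact wavelet maps to extract the factor~$\norm{\gamma^q f}$. You also correctly flag the main subtleties (non-unitarity, layer dependence, finite support only for the actual filters).

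Where the paper differs is in how the decomposition is set up, and its choice is cleaner. Rather than writing $\gamma^q$ as a $\tilde R_h$-sandwich and then swapping layer by layer to $R_h$, the paper factors out $\gamma^q$ \emph{first} via the algebraic identity
\[
  \gamma^q - \gamma^{q,(\cL)}_{\mera} \;=\; \bigl(I - 4\,\gamma^{q,(\cL)}_{\mera}\,\gamma^p\bigr)\,\gamma^q,
\]
and then replaces $I$ by $W_h^{(\cL'),\tran} W_g^{(\cL')}$ (biorthogonality of the \emph{actual} finitely supported pair) while using the single intertwining relation $R_h^{(\cL)}\gamma^p=(\gamma^{p,(\cL)}\op\tfrac12\id)R_{\tilde g}^{(\cL)}$ to rewrite the second term. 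This yields the same three-term split you describe, but with two advantages: first, $\gamma^q$ sits on the right from the outset, so nothing unbounded ever appears in the middle and your ``recombination'' step is unnecessary; second, only $\tilde g$ enters, never $\tilde h$. This matters because the assumption \cref{eq:approximate phase supp} directly controls $\lvert g_w-\tilde g_w\rvert\le\eps$, whereas the dual deviation $\lvert h_w-\tilde h_w\rvert = \lvert h_w - (\omega(\pi)/\omega)\,g_w\rvert$ picks up a factor $1/\omega$ and need not be small near $k=0$. Your remark that ``any residual infrared growth is absorbed into $\norm{\gamma^q f}$'' is morally right, but the paper's organization sidesteps the issue entirely. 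For part~(ii) the same trick is used with the roles of $g$ and $\tilde g$ exchanged, and the hypothesis $\omega^{(l)}(\pi)\le1$ enters exactly where you say.
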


\noindent
To interpret these bounds, we note that $\norm{\gamma^{p,(\cL)}} = \max_k \frac{\omega^{(\cL)}(k)}2$, which is typically~$\mathcal O(1)$.
As a remark, for the critical harmonic chain we that $\omega^{(l)}(\pi) = \frac12$, in which case it is not hard to see that the scaling of \cref{eq:error p covariance} can be improved to $2^{-\frac32 \cL'}$.

\begin{proof}[Proof of \cref{lem:multiple layer error}]
(i) To prove \cref{eq:error q covariance}, we first observe that by definition of $\tilde{g}$ it holds that
\begin{equation*}
  R_{h^{(l)}} \omega^{(l)}(k) = (\omega^{(l+1)} \op \id) R_{\tilde g^{(l)}}
\end{equation*}
and hence
\begin{equation}\label{eq:Rh vs Rgtilde}
  R^{(\cL)}_h \gamma^p = (\gamma^{p,(\cL)} \op \frac12\id) R^{(\cL)}_{\tilde g},
\end{equation}
where $\gamma^{p,(\cL)}(k) = \frac12 \omega^{(\cL)}(k)$ denotes the covariance matrix defined using the renormalized dispersion relation.
We use this, together with the fact that $4\gamma^p \gamma^q = \id$ on the domain of~$\gamma^q$ to write
\begin{equation*}\begin{aligned}
  \gamma^q - \gamma^{q,(\cL)}_{\mera} f
&= (I - \gamma^{q,(\cL)}_{\mera} 4 \gamma^p) \gamma^q f \\
&= (W_h^{(\cL'),\tran}W_g^{(\cL')} - R_h^{(\cL),\tran}R_h^{(\cL)} 2\gamma^p) \gamma^q f \\
&= (W_h^{(\cL'),\tran}W_g^{(\cL')} - R_h^{(\cL),\tran}(2\gamma^{p,(\cL)} \op \id) R^{(\cL)}_{\tilde g}) \gamma^q f \\
&= (W_h^{(\cL'),\tran}W_g^{(\cL')} - W_h^{(\cL),\tran}(2\gamma^{p,(\cL)} \op \id) W^{(\cL)}_{\tilde g}) \gamma^q f.
\end{aligned}\end{equation*}
for $f$ in the domain of $\gamma^q$.
Thus,
\begin{equation}\begin{aligned}\label{eq:estimate covariance}
  \abs{\braket{\delta_n | \gamma^q - \gamma^{q,(\cL)}_{\mera} | f}}
&\leq \abs{\braket{\delta_n | W_h^{(\cL'),\tran} p^{(\cL')}_s W_g^{(\cL')} | \gamma^q f}} \\
&+ \abs{\braket{\delta_n | W_h^{(\cL'),\tran} p^{(\cL')}_w \bigl(W_g^{(\cL')} - W_{\tilde{g}}^{(\cL')}\bigr) | \gamma^q f}} \\
&+ \abs{\braket{\delta_n | (W_h^{(\cL),\tran}(2\gamma^{p,(\cL)} \op \id) W_{\tilde{g}}^{(\cL)} - W_h^{(\cL'),\tran} p^{(\cL')}_w W_{\tilde{g}}^{(\cL')}) | \gamma^q f}}.
\end{aligned}\end{equation}
We will bound the three terms separately, starting with the second term.
By our assumption on the filters (\cref{eq:approximate phase supp}), $\norm{W_{g^{(l)}} - W_{\tilde{g}^{(l)}}} \leq 2\eps$.
Hence, using a telescoping sum,
\begin{equation}\label{eq:Wg vs Wgtilde}
  \norm{W^{(\cL')}_{g} - W^{(\cL')}_{\tilde{g}}}
\leq \sum_{l=0}^{\cL'-1} \norm{W^{(l+1,\cL')}_{g}} \norm{W_{g^{(l)}} - W_{\tilde{g}^{(l)}}} \norm{W^{(l)}_{\tilde{g}}}
\leq 2 \eps \cL' \wnorm^2,
\end{equation}
so we obtain the estimate
\begin{equation*}
  \abs{\braket{\delta_n | W_h^{(\cL'),\tran} p^{(\cL')}_w \bigl(W_g^{(\cL')} - W_{\tilde{g}}^{(\cL')}\bigr) | \gamma^q f}}
\leq \norm{W_h^{(\cL')}} \norm{W_g^{(\cL')} - W_{\tilde{g}}^{(\cL')}} \norm{\gamma^q f}
\leq 2\eps \cL' \wnorm^3 \norm{\gamma^q f}.
\end{equation*}
The first term in \cref{eq:estimate covariance} can be bounded directly using \cref{lem:IR bound},
\begin{equation*}
  \abs{\braket{\delta_n | W_h^{(\cL'),\tran} p^{(\cL')}_sW_g^{(\cL')} |\gamma^q f}}
\leq \norm{p^{(\cL')}_s W_h^{(\cL')} \delta_n} \norm{W_g^{(\cL')}\gamma^q f}
\leq 2^{-\frac{\cL' - 1}{2}} B^2 M^{\frac32} \wnorm \norm{\gamma^q f},
\end{equation*}
and the third term may be similarly bounded as
\begin{equation*}\begin{aligned}
&\quad\abs{\braket{\delta_n | (W_h^{(\cL),\tran}(2\gamma^{p,(\cL)} \op \id)W_{\tilde{g}}^{(\cL)} - W_h^{(\cL'),\tran} p^{(\cL')}_w W_{\tilde{g}}^{(\cL')}) | \gamma^q f}} \\
&= \abs{\braket{\delta_n | (W_h^{(\cL'),\tran} (W_h^{(\cL',\cL),\tran} \op \id^{\op\cL'}) (2\gamma^{p,(\cL)} \op \id^{\op(\cL-\cL')} \op 0^{\op\cL'})W_{\tilde{g}}^{(\cL)} | \gamma^q f}} \\
&= \abs{\braket{\delta_n | (W_h^{(\cL'),\tran} p^{(\cL')}_s (W_h^{(\cL',\cL),\tran} \op 0^{\op\cL'}) (2\gamma^{p,(\cL)} \op \id^{\op\cL})W_{\tilde{g}}^{(\cL)} | \gamma^q f}} \\
&\leq \norm{p^{(\cL')}_s W_h^{(\cL')} \delta_n} \norm{W_h^{(\cL',\cL)}} \norm{2\gamma^{p,(\cL)} \op \id^{\op\cL}} \norm{W_{\tilde{g}}^{(\cL)}} \norm{\gamma^q f} \\
&\leq 2^{-\frac{\cL' - 1}{2}} B^2 M^{\frac32} \wnorm^2 \max \, \{ 2\norm{\gamma^{p,(\cL)}}, 1 \} \norm{\gamma^q f}.
\end{aligned}\end{equation*}
By combining the three estimates we obtain \cref{eq:error q covariance}.

(ii) To prove \cref{eq:error p covariance} we use \cref{eq:mera covariance matrices,eq:Rh vs Rgtilde} to write
\begin{equation*}\begin{aligned}
  \gamma^p - \gamma^{p,(\cL)}_{\mera}
&= \gamma^p - \frac12 R_g^{(\cL),\tran} R_g^{(\cL)}
= \gamma^p R^{(\cL'),\tran}_h R_g^{(\cL')} - \frac12 R_g^{(\cL),\tran} R_g^{(\cL)} \\
&= R^{(\cL'),\tran}_{\tilde g} (\gamma^{p,(\cL')} \op \frac12\id) R_g^{(\cL')} - \frac12 R_g^{(\cL),\tran} R_g^{(\cL)}.
\end{aligned}\end{equation*}
Therefore,
\begin{equation*}\begin{aligned}
  \norm{( \gamma^p - \gamma^{p,(\cL)}_{\mera} ) \delta_n}
&\leq \norm{R^{(\cL'),\tran}_{\tilde g} \gamma^{p,(\cL')} p^{(\cL')}_s R_g^{(\cL')} \delta_n} \\
&+ \frac12 \norm{(R^{(\cL'),\tran}_{\tilde g} - R^{(\cL'),\tran}_g) p^{(\cL')}_w R_g^{(\cL')} \delta_n} \\
&+ \frac12 \norm{(R^{(\cL'),\tran}_g p^{(\cL')}_w R_g^{(\cL')} - R_g^{(\cL),\tran} R_g^{(\cL)}) \delta_n}
\end{aligned}\end{equation*}
As before we bound the three terms separately, starting with the second term.
Since $\omega^{(l)}(\pi) \leq 1$, we may estimate $\norm{R^{(\cL')}_g} \leq \norm{W^{(\cL')}_g} \leq D$ and $\norm{R^{(\cL')}_{\tilde{g}} - R^{(\cL')}_g} \leq 2\eps \cL' \wnorm^2$ by a telescoping sum as in \cref{eq:Wg vs Wgtilde}.
Thus:
\begin{equation*}\begin{aligned}
  \frac12 \norm{(R^{(\cL'),\tran}_{\tilde g} - R^{(\cL'),\tran}_g) p^{(\cL')}_w R_g^{(\cL')} \delta_n}
&\leq \frac12 \norm{(R^{(\cL'),\tran}_{\tilde g} - R^{(\cL'),\tran}_g)} \norm{R_g^{(\cL')}}
\leq \eps \cL' \wnorm^3.
\end{aligned}\end{equation*}
For the remaining terms, we note that $\omega^{(l)}(\pi) \leq 1$ also implies that $\norm{p^{(\cL')}_s R^{(\cL')}_g \delta_n} \leq \norm{p^{(\cL')}_s W^{(\cL')}_g \delta_n} \leq 2^{-\frac{\cL' - 1}{2}} B^2 M^{\frac32}$ using \cref{lem:IR bound}.
We can thus bound the first term by
\begin{equation*}\begin{aligned}
  \norm{R^{(\cL'),\tran}_{\tilde g} \gamma^{p,(\cL')} p^{(\cL')}_s R_g^{(\cL')} \delta_n}
&\leq \norm{R^{(\cL'),\tran}_{\tilde g}} \norm{\gamma^{p,(\cL')}} \norm{p^{(\cL')}_s R_g^{(\cL')} \delta_n}
\leq 2^{-\frac{\cL' - 1}{2}} B^2 M^{\frac32} \wnorm \norm{\gamma^{p,(\cL')}},
\end{aligned}\end{equation*}
and similarly the third term, where we find
\begin{equation*}\begin{aligned}
  \frac12 \norm{(R_g^{(\cL),\tran} R_g^{(\cL)} - R^{(\cL'),\tran}_g p^{(\cL')}_w R_g^{(\cL')}) \delta_n}
&= \frac12 \norm{R_g^{(\cL),\tran} R_g^{(\cL',\cL)} p^{(\cL')}_s R_g^{(\cL')} \delta_n} \\
&\leq \frac12 \norm{R_g^{(\cL),\tran}} \norm{R_g^{(\cL',\cL)}} \norm{p^{(\cL')}_s R_g^{(\cL')} \delta_n} \\
&\leq \frac12 2^{-\frac{\cL' - 1}{2}} B^2 M^{\frac32} \wnorm^2.
\end{aligned}\end{equation*}
By combining the three estimates we obtain \cref{eq:error p covariance}.
\end{proof}

\noindent
We finally prove our general approximation theorem.

\begin{proof}[Proof of \cref{thm:approximation}]
Choosing
$\cL' = \min \, \{ \lfloor 2 \log_2 \frac{C}{\eps}\rfloor , \cL \}$,
we see that
\begin{equation*}\begin{aligned}
  \eps \cL' \wnorm + 2^{-\frac{\cL'-1}2} B^2 M^{\frac32} \max \, \{ 2\norm{\gamma^{p,(\cL')}}, 1 \}
  &\leq \eps \cL' \wnorm + 2^{-\frac{\cL'-1}2} B^2 M^{\frac32} \Omega \\
  &\leq 2\eps \wnorm \log_2 \frac{C}{\eps} + \max\{C2^{-\frac{\cL}2}, \eps\} \\
  &\leq 3\eps \wnorm \log_2 \frac{C}{\eps} + C2^{-\frac{\cL}2},
\end{aligned}\end{equation*}
where we have used that $\frac{C}{\eps} \geq 2$.
Now the result follows from \cref{eq:error p covariance} and \cref{eq:error q covariance} in \cref{lem:multiple layer error}, choosing $f = \delta_m$ or $f = \delta_m - \delta_n$ in the latter (and using $\norm{\gamma^q \delta_m} = \norm{\gamma^q \delta_0}$).
\end{proof}

Finally, we claim that the theorem in the main text is just the specialization of \cref{thm:approximation} to the harmonic chain with mass~$m$.
The correlation functions are related to the covariance matrices as follows:
\begin{align*}
  \langle p_i p_j \rangle = \gamma^p_{ij}, \\
  \langle q_i q_j \rangle = \gamma^q_{ij},
\end{align*}
the latter assuming $m>0$.
If $m = 0$ then the latter has a divergence, so we instead define
\begin{align}\label{eq:subtract divergence}
  \langle q_i q_j \rangle := \gamma^q_{ij} - \gamma^q_{ii} = \tilde\gamma^q_{ij},
\end{align}
where $\tilde\gamma^q$ is the regulated covariance matrix defined in \cref{eq:regulated cov}.
Accordingly, we would like to bound the quantities $\Delta^p_{ij}$ as well as $\Delta^q_{ij}$ (in the massive case) or $\tilde\Delta^q_{ij}$ (in the massless case), which are defined in \cref{eq:define delta,eq:define delta tilde}.
This is exactly achieved by \cref{thm:approximation}.
We first normalize the dispersion relation of the harmonic chain~$\omega(k)$ by a factor $\sqrt{m^2 + 1}$ to $\omega_{\normalized}$, so that $\omega_{\normalized}(\pi) = 1$.
There $\omega_{\normalized}^{(l)}(k) \leq 1$ and we may apply \cref{thm:approximation} with $\Omega = 1$.
We write~$\gamma$ for the original covariance matrix of the harmonic chain and~$\gamma_{\normalized}$ for the covariance matrix where the dispersion relation has been normalized, that is, $\gamma_{\normalized}^p = \frac{1}{\sqrt{m^2 + 1}} \gamma^p$ and $\gamma_{\normalized}^q = \sqrt{m^2 + 1} \gamma^q$.
Then,
\begin{equation*}
  \frac1{m^2+1} \norm{\gamma_{\normalized}^q \delta_0}^2 = \norm{\gamma^q \delta_0}^2
= \int_{-\pi}^{\pi}\frac{\d k}{\omega(k)^2}
= \int_{-\pi}^{\pi}\frac{\d k}{m^2 + \sin^2\bigl(\frac{k}{2}\bigr)}
\leq \frac{2\pi}{m^2},
\end{equation*}
so applying \cref{thm:approximation} using the covariance matrix $\gamma_{\normalized}$ and restoring the factor $\sqrt{m^2 + 1}$ yields the results for $\Delta^p_{ij}$ and $\Delta^q_{ij}$.
In the massless case we can use \cref{eq:second norm interp} and estimate
\begin{equation*}\begin{aligned}
\norm{\gamma^q (\delta_i - \delta_j)}^2
&= \int_{-\pi}^{\pi}\d k\, \frac{\sin^2(\frac{\lvert i - j \rvert k}{2})}{\sin^2(\frac{k}{2})} \\
&\leq 2 \left( \int_0^{\frac{2}{\lvert i - j \rvert}} \d k\, \frac{\pi^2 \lvert i - j \rvert^2}{4} + \int_{\frac{2}{\lvert i-j\rvert}}^{\pi}\d k\, \frac{\pi^2}{k^2} \right) \\
&\leq 2\pi^2\lvert i-j\rvert,
\end{aligned}\end{equation*}
since $\lvert\sin(\frac{k}{2})\rvert \geq \frac{\lvert k \rvert}{\pi}$ and $\lvert \sin(\frac{nk}{2}) \rvert \leq \min \, \{\frac{n\lvert k \rvert}{2}, 1\}$ on the interval~$(-\pi,\pi)$, yielding the estimate for $\tilde\Delta^q_{ij}$.

\end{appendix}

\bibliography{library.bib}

\nolinenumbers

\end{document}